\newcommand{\SarielComp}[1]{}
\newcommand{\NotSarielComp}[1]{#1}%
\newcommand{\SarielComp}[1]{#1}%
\newcommand{\NotSarielComp}[1]{}%
\newcommand{\IfPrinterVer}[2]{#2}%
\newcommand{\UsePackage}[1]{%
  \IfFileExists{../styles/#1.sty}{%
      \usepackage{../styles/#1}%
   }{%
      \IfFileExists{./styles/#1.sty}{%
         \usepackage{styles/#1}%
      }{%
         \usepackage{#1}%
      }%
   }%
}
\newlist{compactenumA}{enumerate}{5}%
\setlist[compactenumA]{topsep=0pt,itemsep=-1ex,partopsep=1ex,parsep=1ex,%
   label=(\Alph*)}%
\newlist{compactenumi}{enumerate}{5}%
\setlist[compactenumi]{topsep=0pt,itemsep=-1ex,partopsep=1ex,parsep=1ex,%
   label=(\roman*)}%
\providecommand{\BibLatexMode}[1]{}
\providecommand{\BibTexMode}[1]{#1}
\definecolor{blue25}{rgb}{0,0,0.7}
\providecommand{\emphic}[2]{%
   \textcolor{blue25}{%
      \textbf{\emph{#1}}}%
   \index{#2}}
\providecommand{\emphi}[1]{\emphic{#1}{#1}}
\theoremstyle{plain}%
\newtheorem{theorem}{Theorem}[section]
\newtheorem{lemma}[theorem]{Lemma}
\newtheorem{corollary}[theorem]{Corollary}
\newtheorem{observation}[theorem]{Observation}
\theoremstyle{plain}%
\newtheorem*{remark:unnumbered}[theorem]{Remark}%
\newtheorem{remark}[theorem]{Remark}%
\newtheorem{definition}[theorem]{Definition}
\newcommand{\myqedsymbol}{\rule{2mm}{2mm}}
\theoremstyle{nonumberplain}%
\newtheorem{proof}{Proof:}%
\newcommand{\atgen}{\symbol{'100}}
\newcommand{\SarielThanks}[1]{\thanks{Department of Computer Science;
      University of Illinois; 201 N. Goodwin Avenue; Urbana, IL,
      61801, USA; {\tt sariel\atgen{}illinois.edu}; {\tt
         \url{http://sarielhp.org/}.} #1}}
\newcommand{\MitchellThanks}[1]{%
   \thanks{%
      Department of Computer Science;
      University of Illinois; 201 N. Goodwin Avenue; Urbana, IL,
      61801, USA; {\tt mfjones2\atgen{}illinois.edu}; {\tt
         \url{http://mfjones2.web.engr.illinois.edu/}.} #1}}
\numberwithin{figure}{section}%
\numberwithin{table}{section}%
\numberwithin{equation}{section}%
\newcommand{\HLink}[2]{\hyperref[#2]{#1~\ref*{#2}}}
\newcommand{\HLinkSuffix}[3]{\hyperref[#2]{#1\ref*{#2}{#3}}}
\newcommand{\thmlab}[1]{{\label{theo:#1}}}
\newcommand{\thmref}[1]{\HLink{Theorem}{theo:#1}}
\newcommand{\corlab}[1]{\label{cor:#1}}
\newcommand{\corref}[1]{\HLink{Corollary}{cor:#1}}%
\newcommand{\seclab}[1]{\label{sec:#1}}
\newcommand{\secref}[1]{\HLink{Section}{sec:#1}}
\newcommand{\remlab}[1]{\label{rem:#1}}
\newcommand{\remref}[1]{\HLink{Remark}{rem:#1}}%
\providecommand{\deflab}[1]{\label{def:#1}}
\newcommand{\defref}[1]{\HLink{Definition}{def:#1}}
\newcommand{\apndlab}[1]{\label{apnd:#1}}
\newcommand{\apndref}[1]{\HLink{Appendix}{apnd:#1}}
\newcommand{\lemlab}[1]{\label{lemma:#1}}
\newcommand{\lemref}[1]{\HLink{Lemma}{lemma:#1}}%
\providecommand{\eqlab}[1]{}%
\renewcommand{\eqlab}[1]{\label{equation:#1}}
\newcommand{\Eqref}[1]{\HLinkSuffix{Eq.~(}{equation:#1}{)}}
\newcommand{\Set}[2]{\left\{ #1 \;\middle\vert\; #2 \right\}}
\newcommand{\pth}[2][\!]{\mleft({#2}\mright)}%
\newcommand{\pbrcx}[1]{\left[ {#1} \right]}%
\newcommand{\Prob}[1]{\mathop{\mathbf{Pr}}\!\pbrcx{#1}}
\newcommand{\Ex}[2][\!]{\mathop{\mathbf{E}}#1\pbrcx{#2}}
\newcommand{\ceil}[1]{\left\lceil {#1} \right\rceil}
\newcommand{\cardin}[1]{\left| {#1} \right|}%
\renewcommand{\th}{th\xspace}
\newcommand{\tldO}{\scalerel*{\widetilde{O}}{j^2}}%
\renewcommand{\Re}{\mathbb{R}}%
\providecommand{\Mh}[1]{#1}
\newcommand{\eps}{\varepsilon}
\newcommand{\epsA}{\Mh{\vartheta}}%
\newcommand{\epsB}{\Mh{\gamma}}%
\newcommand{\epsW}{\Mh{\xi}}%
\newcommand{\body}{\Mh{\mathsf{C}}}
\newcommand{\bodyA}{\Mh{\mathsf{D}}}
\newcommand{\HI}{\Mh{\mathcal{H}^\SuccIters}}%
\newcommand{\constCN}{\mathsf{c}_1}
\newcommand{\cpq}{\Mh{\alpha}}%
\newcommand{\ts}{\hspace{0.6pt}}
\newcommand{\fn}{\Mh{f}}
\newcommand{\LvSetY}[2]{\mathcal{L}_{#1}\pth{#2}}
\newcommand{\DotProdY}[2]{\left\langle #1, #2 \right\rangle}
\newcommand{\hplane}{\Mh{h}}
\newcommand{\cen}{\Mh{\overline{c}}}%
\newcommand{\pnt}{\Mh{p}}%
\newcommand{\pntq}{\Mh{q}}
\newcommand{\PSet}{\ensuremath{\Mh{P}}\xspace}%
\newcommand{\PSetA}{\ensuremath{\Mh{Q}}\xspace}%
\newcommand{\PSetB}{\ensuremath{\Mh{T}}\xspace}%
\newcommand{\Ranges}{\Mh{\mathcal{C}}}%
\newcommand{\VC}{\textsf{VC}\xspace}%
\newcommand{\range}{\Mh{\mathsf{r}}}%
\newcommand{\MeasureChar}{\Mh{\overline{m}}}%
\newcommand{\Measure}[1]{\MeasureChar\pth{#1}}
\newcommand{\sMeasure}[1]{\Mh{\overline{s}}\pth{#1}}
\newcommand{\Wnet}{\Mh{\mathcal{W}}}%
\newcommand{\VVX}[1]{\mathcal{V}\pth{#1}}%
\newcommand{\Net}{\Mh{{S}}}%
\newcommand{\VCProj}[2]{#1_{|#2}}
\newcommand{\VCDim}{\mathrm{dim}^{}_{\VC}}
\newcommand{\Dim}{\Mh{\xi}}%
\providecommand{\Matousek}{Matou{\v s}ek\xspace}
\newcommand{\hrefb}[3][black]{\href{#2}{\color{#1}{#3}}}%
\newcommand{\query}{\Mh{q}}%
\newcommand{\RangeSpace}{\Mh{\mathsf{S}}}%
\newcommand{\GroundSet}{\Mh{\widehat{\mathsf{X}}}}%
\newcommand{\FGroundSet}{\Mh{{\mathsf{X}}}}%
\newcommand{\RangeSet}{\Mh{\EuScript{R}}}%
\newcommand{\BadProb}{\Mh{\varphi}}%
\newcommand{\polylog}{\mathrm{polylog}}
\newcommand{\YY}{\mathcal{Y}}%
\newcommand{\etal}{\textit{et~al.}\xspace}
\newcommand{\hset}{\Mh{{H}}}%
\newcommand{\CHX}[1]{\Mh{\mathcal{CH}}\pth{#1}}
\newcommand{\Sample}{\Mh{S}}%
\newcommand{\rmax}{\Mh{R}}%
\newcommand{\epsilonA}{\widehat{\eps}}%
\newcommand{\cPnt}{\Mh{\overline{\mathrm{c}}}}%
\newcommand{\Arr}{\Mh{\mathop{\mathrm{\EuScript{A}}}}}%
\newcommand{\ArrX}[1]{\Arr\pth{#1}}%
\newcommand{\SampleSize}{\Mh{\mu}}%
\newcommand{\SuccIters}{\Mh{\tau}}%
\newcommand{\SaveContent}[2]{%
   \expandafter\newcommand{#1}{#2}%
}
\newcommand{\RestatementOf}[2]{
   \noindent%
   \textbf{Restatement of #1.}
   {\em #2{}}%
}
\newcommand{\RestatementOfExt}[3]{
   \noindent%
   \textbf{#1 of #2.}
   {\em #3{}}%
}
\begin{document}

\title{Journey to the Center of the Point Set}%

\author{Sariel Har-Peled%
   \SarielThanks{Work on this paper was partially supported by a NSF
      AF awards CCF-1421231, and 
      CCF-1217462.  
   }%
   \and%
   Mitchell Jones%
   \MitchellThanks{}%
}

\date{\today}

\maketitle

\begin{abstract}
    We revisit an algorithm of Clarkson \etal \cite{cemst-acpir-96},
    that computes (roughly) a $1/(4d^2)$-centerpoint in $\tldO(d^9)$
    time, for a point set in $\Re^d$, where $\tldO$ hides
    polylogarithmic terms. We present an improved algorithm that
    computes (roughly) a $1/d^2$-centerpoint with running time
    $\tldO(d^7)$. While the improvements are (arguably) mild, it is
    the first progress on this well known problem in over twenty
    years. The new algorithm is simpler, and the
    running time bound follows by a simple random walk argument, which
    we believe to be of independent interest.  We also present several
    new applications of the improved centerpoint algorithm.
\end{abstract}


\section{Introduction}

\paragraph*{Notation.}
In the following $O(\cdot)$ hides constants that do not depend on the
dimension. $O_d(\cdot)$ hides constants that depend on the dimension
(usually badly -- exponential or doubly exponential, or even
worse). The notation $\tldO(\cdot)$ hides
polylogarithmic factors, where the power of the polylog is independent
of the dimension.

\paragraph*{Computing centerpoints.}

A classical implication of Helly's theorem, is that for any set
$\PSet$ of $n$ points in $\Re^d$, there is a
$1/(d+1)$-centerpoint. Specifically, given a constant
$\alpha \in (0,1)$, a point $\cen \in \Re^d$ is an
$\alpha$-centerpoint if all closed halfspaces containing $\cen$ also
contain at least $\alpha n$ points of $\PSet$. It is currently unknown
if one can compute a $\Omega(1/d)$-centerpoint in polynomial time (in the
dimension). A randomized polynomial time algorithm was presented by
Clarkson \etal \cite{cemst-acpir-96}, that computes (roughly) a
$1/(4d^2)$-centerpoint in $\tldO(d^9)$ time.

\paragraph*{Weak $\eps$-nets.}
Consider the range space $(\PSet, \Ranges)$, where $\PSet$ is a set of
$n$ points in $\Re^d$, and $\Ranges$ is the set of all convex shapes
in $\Re^d$. This range space has infinite \VC dimension, and as such
it is impervious to the standard $\eps$-net constructions.  \emph{Weak
   $\eps$-nets} bypass this issue by using points outside the point
set. While there is significant amount of work on weak $\eps$-nets,
the constructions known are not easy and result in somewhat large
sets. The state of the art is the work by \Matousek and Wagner
\cite{mw-ncwen-04}, which shows a weak $\eps$-net construction of size
$O( \eps^{-d} (\log \eps^{-1})^{O( d^2 \log d)} )$. Rubin recently
improved this result for points in $\Re^2$, proving the existence of 
weak $\eps$-nets of size $O\pth{\eps^{-(1.5 + \gamma)}}$ for 
arbitrarily small $\gamma > 0$ \cite{r-ibwpnp-18}.
Such a weak $\eps$-net $\Wnet$ has the guarantee that any convex set 
$\body$ that contains at least $\eps n$ points of $\PSet$, must 
contain at least one point of $\Wnet$. See \cite{mv-eaen-17} for a 
recent survey of $\eps$-nets and related concepts. See also the recent 
work by Rok and Smorodinsky \cite{rs-wnmp-16} and references therein.

\paragraph*{Basis of weak $\eps$-nets.}
Mustafa and Ray \cite{mr-wenbs-08} showed that one can pick a random
sample $\Sample$ of size $c_d \eps^{-1} \log \eps^{-1}$ from $\PSet$,
and then compute a weak $\eps$-net for $\PSet$ directly from
$\Sample$, showing that the size of the support needed to compute a weak
$\eps$-net is (roughly) the size of a regular
$\eps$-net. Unfortunately, the constant in their sample
$c_d = O\bigl( d^{d} \pth{\log d}^{cd^3 \log d} \bigr)$ is doubly
exponential in the dimension. This constant $c_d$ is related to the
$((d+1)^2, d+1)$-Hadwiger-Debrunner number (the best known upper bounds
on $(p,q)$-Hadwiger-Debrunner numbers can be found in 
\cite{kst-omcig-17, kst-omcig-15}).

In particular, all current results about weak $\eps$-nets suffer from
the ``curse of dimensionality'' and have constants that are at least
doubly exponential in the dimension.


\paragraph*{Our results.}
Let $\PSet$ be a set of $n$ points in $\Re^d$. In addition
to the improved algorithm for computing approximate centerpoints,
we also suggest two alternatives to weak $\eps$-nets as applications,  
and obtain some related results:

\begin{compactenumA}
    \item \textbf{Approximating centerpoints.} %
    We revisit the algorithm of Clarkson \etal \cite{cemst-acpir-96}
    for approximating a centerpoint. We present an improved algorithm,
    which is a variant of their algorithm which runs in $\tldO(d^7)$
    time, and computes roughly a $1/(d+2)^2$-centerpoint. This improves
    both the running time, and the quality of centerpoint
    computed. While the improvements are small (a factor of $d^2$
    roughly in the running time, and a factor of four in the centerpoint
    quality), we believe that the new algorithm is simpler. The
    analysis is cleaner, and is of independent interest.  In
    particular, the analysis uses a random walk argument, which is
    quite uncommon in computational geometry, and (we believe) is of
    independent interest. See \thmref{center:point:compute:epsW}. 
    This is the first improvement of the randomized algorithm of 
    Clarkson \etal \cite{cemst-acpir-96} in over twenty years. Miller 
    and Sheehy also derandomized the algorithm of Clarkson \etal, 
    computing a $\Omega(1/d^2)$-centerpoint in time $n^{O(\log d)}$ 
    \cite{ms-acwp-10}.
  
    \item \textbf{Lowerbounding convex functions.}
    Given a convex function $f$ in $\Re^d$, such that one
    can compute its value and gradient at a point efficiently,
    we present an
    algorithm that computes quickly a realizable lower-bound on
    the value of $f$ over $\PSet$. Formally, the algorithm
    computes a point $\pntq \in \Re^d$, such that
    $f(\pntq) \leq \min_{\pnt \in \PSet} f(\pnt)$. The algorithm
    is somewhat similar in spirit to the ellipsoid algorithm. The
    running time of the algorithm is $\tldO\pth{d^9}$.
    See \thmref{lowerbound:convex:fn}.

    \item \textbf{Functional nets.} 
    Let $\body \subseteq \Re^d$ be a convex body. Suppose we are only
    given access to $\body$ via a separation oracle: given a query 
    point $\query$, the oracle either returns that $\query$ is in 
    $\body$, or alternatively, the oracle returns a hyperplane 
    separating $\query$ and $\body$. We show that a random sample of 
    size
    \begin{align*}
      O \pth{  \eps^{-1} d^3 \log d  \log^3 \eps^{-1} +
      \eps^{-1} \log{\BadProb}^{-1}}%
      =%
      \tldO\pth{ d^3/\eps },
    \end{align*}
    with probability $\geq 1-\BadProb$, can be used to decide if a
    query convex body $\body$ is $\eps$-light.  Formally, the
    algorithm, using only the sample, performs
    $O( d^2 \log \eps^{-1})$ oracle queries -- if any of the query
    points generated stabs $\body$, then $\body$ is considered as
    (potentially) containing more than $\eps n$ points. Alternatively,
    if all the queries missed $\body$, then $\body$ contains less than
    $\eps n$ points of $\PSet$. The query points can be computed in
    polynomial time, and we emphasize that the dependency in the
    running time and sample size are polynomial in $\eps$ and $d$.
    See \thmref{func:net}. As such, this result can be viewed as
    slightly mitigating the curse of dimensionality in the context of
    weak $\eps$-nets.

    \item \textbf{Center nets.} Using the above, one can also
    construct a weak $\eps$-net directly from such a sample -- this
    improves over the result of Mustafa and Ray \cite{mr-wenbs-08} as
    far as the dependency on the dimension is concerned. This is described in
    \apndref{weak:eps:net}.

    Surprisingly, by using ideas from \thmref{func:net} one can get a
    stronger form of a weak $\eps$-net, which we refer to as an
    \emph{$(\eps,\cpq)$-center net}. Here
    $\cpq = \Omega( 1/(d\log \eps^{-1}))$ and one can compute a set $\Wnet$
    of size (roughly) $\tldO_d \bigl( \eps^{-O(d^2)} \bigr)$, such
    that if a convex body $\body$ contains $\geq \eps n$ points of
    $\PSet$, then $\Wnet$ contains a point $\query$ which is an
    $\cpq$-centerpoint of $\body \cap \PSet$. Namely, the net contains
    a point that stabs $\body$ in the ``middle'' as far as the point
    set $\body \cap \PSet$.  See \thmref{center:net}.
\end{compactenumA}

\paragraph*{Paper organization.}

The improved centerpoint approximation algorithm is described in
\secref{approx:centerpoint}. Two applications of the improved
centerpoint algorithm are presented in \secref{algs:oracle:access}.
The construction of center nets is described in \secref{center:nets}.
Background and standard tools used are described in \secref{background}.


\section{Background}
\seclab{background}

\subsection{Ranges spaces, \VC dimension, %
   samples and nets}
\seclab{background:vc}

The following is a quick survey of (standard) known results about
$\eps$-nets, $\eps$-samples, and relative approximations
\cite{h-gaa-11}.
\begin{definition}
    A \emphi{range space} $\RangeSpace$ is a pair
    $(\GroundSet,\RangeSet)$, where $\GroundSet$ is a \emphi{ground
       set} (finite or infinite) and $\RangeSet$ is a (finite or
    infinite) family of subsets of $\GroundSet$. The elements of
    $\GroundSet$ are \emphi{points} and the elements of $\RangeSet$
    are \emphic{ranges}{range}.
\end{definition}

For technical reasons, it will be easier to consider a finite subset
$\FGroundSet \subseteq \GroundSet$ as the underlining ground set.

\begin{definition}
    \deflab{measure}%
    \deflab{s:measure}%
    Let $\RangeSpace = (\GroundSet,\RangeSet)$ be a range space, and
    let $\FGroundSet$ be a finite (fixed) subset of $\GroundSet$. For
    a range $\range \in \RangeSet$, its \emphi{measure} is the
    quantity
    \begin{math}
        \Measure{\range} = {\cardin{\range \cap
              \FGroundSet}}/{\cardin{\FGroundSet}}.
    \end{math}
    %
%
    For a subset $\Net \subseteq \FGroundSet$, its \emphi{estimate} of
    $\Measure{\range}$, for $\range \in \RangeSet$, is the quantity
    \begin{math}
        \sMeasure{\range}%
        =%
        {\cardin{\range \cap \Net}} / {\cardin{\Net}}.
    \end{math}
\end{definition}

\begin{definition}
    \deflab{v:c:dimension}%
    Let $\RangeSpace = (\GroundSet,\RangeSet)$ be a range space. For
    $Y \subseteq \GroundSet$, let
    \begin{math}
        \VCProj{\RangeSet}{Y}%
        =%
        \Set{\range \cap Y}{ \range \in \RangeSet\Bigr.\,}
    \end{math}
    denote the \emphic{projection}{range space!projection} of
    $\RangeSet$ on $Y$. The range space $\RangeSpace$ projected to $Y$
    is $\VCProj{\RangeSpace}{Y} = \pth{Y, \VCProj{\RangeSet}{Y}}$.  If
    $\VCProj{\RangeSet}{Y}$ contains all subsets of $Y$ (i.e., if $Y$
    is finite, we have
    $\cardin{\VCProj{\RangeSet}{Y}} = 2^{\cardin{Y}}$), then $Y$ is
    \emphi{shattered} by $\RangeSet$ (or equivalently $Y$ is shattered
    by $\RangeSpace$).
    
    The \emphi{\VC{}~dimension} of $\RangeSpace$, denoted by
    $\VCDim(\RangeSpace)$, is the maximum cardinality of a shattered
    subset of $\GroundSet$. If there are arbitrarily large shattered
    subsets, then $\VCDim(S) = \infty$.
\end{definition}

\begin{definition}
    Let $\RangeSpace = (\GroundSet,\RangeSet)$ be a range space, and
    let $\FGroundSet$ be a finite subset of $\GroundSet$. For
    $0 \leq \eps \leq 1$, a subset $\Net \subseteq \FGroundSet$ is an
    \emphic{$\eps$-sample}{sample!sample@$\eps$-sample} for
    $\FGroundSet$ if for any range $\range \in \RangeSet$, we have
    \begin{math}
        | \Measure{\range} - \sMeasure{\range} |%
        \leq%
        \eps,
    \end{math}
    see \defref{measure}. Similarly, a set
    $\Net \subseteq \FGroundSet$ is an
    \emphic{$\eps$-net}{net!eps@$\eps$-net|textbf} for $\FGroundSet$
    if for any range $\range \in \RangeSet$, if
    $\Measure{\range} \geq \eps$ (i.e.,
    $\cardin{\range \cap \FGroundSet} \geq \eps
    \cardin{\FGroundSet}$), then $\range$ contains at least one point
    of $\Net$ (i.e., $\range \cap \Net \ne \emptyset$).
    
    A generalization of both concepts is \emph{relative
       approximation}.  Let $p, \epsilonA > 0$ be two fixed constants.
    A \emphi{relative $(p, \epsilonA)$-approximation} is a subset
    $\Net \subseteq \FGroundSet$ that satisfies
    \begin{math}
        (1-\epsilonA) \Measure{\range} \leq %
        \sMeasure{\range} %
        \leq%
        (1+\epsilonA) \Measure{\range},
    \end{math}
    for any $\range \in \Ranges$ such that $\Measure{\range} \geq
    p$. If $\Measure{\range} < p$ then the requirement is that
    \begin{math}
        \cardin{\sMeasure{\range} - \Measure{\range}} \leq \epsilonA
        p.
    \end{math}
\end{definition}

\begin{theorem}[$\eps$-net theorem, \cite{hw-ensrq-87}]
    \thmlab{epsilon:net}%
    Let $(\GroundSet,\RangeSet)$ be a range space of \VC dimension
    $\Dim$, let $\FGroundSet$ be a finite subset of $\GroundSet$, and
    suppose that $0 < \eps \leq 1$ and $\BadProb < 1$. Let $N$ be a
    set obtained by $m$ random independent draws from $\FGroundSet$,
    where
    \begin{math}
        m \geq \max\pth{ \frac{4}{\eps} \lg \frac{4}{\BadProb} \,,\,
           \frac{8\Dim}{\eps}\lg \frac{16}{\eps} }.
    \end{math}
    Then $N$ is an $\eps$-net for $\FGroundSet$ with probability at
    least $1-\BadProb$.
\end{theorem}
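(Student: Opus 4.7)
The plan is to follow the classical double-sampling argument of Haussler and Welzl. I would draw two independent samples $N_1, N_2 \subseteq \FGroundSet$, each obtained by $m$ random draws, and define two events: $E_1$, that some range $\range$ with $\Measure{\range} \geq \eps$ is disjoint from $N_1$; and the refinement $E_2$, that additionally $\cardin{\range \cap N_2} \geq \eps m / 2$ for some such range. The goal is to show $\Prob{E_1} \leq 2\Prob{E_2}$, then to bound $\Prob{E_2}$ by a counting argument over the joint sample, and finally to choose $m$ so that the resulting bound is at most $\BadProb$.

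For the first inequality, condition on $N_1$ and on a fixed witness range $\range$. Since $\cardin{\range \cap N_2}$ is binomial with mean $m \Measure{\range} \geq \eps m$, a standard tail estimate (valid once $\eps m$ exceeds a small absolute constant, which the stated lower bound on $m$ guarantees) gives $\cardin{\range \cap N_2} \geq \eps m /2$ with probability at least $1/2$. This yields $\Prob{E_2} \geq \Prob{E_1}/2$.

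For the second step, I would reinterpret the pair $(N_1, N_2)$ as a uniformly random partition of the $2m$-point multiset $Z = N_1 \cup N_2$ into two halves. For any fixed range $\range$ containing $k$ points of $Z$, the conditional probability that all $k$ of them land in $N_2$ is at most $\binom{2m-k}{m}/\binom{2m}{m} \leq 2^{-k}$; for an $E_2$-witness this is bounded by $2^{-\eps m / 2}$. To union bound over the possibly infinite family $\RangeSet$, observe that only distinct projections in $\VCProj{\RangeSet}{Z}$ can contribute, and by the Sauer-Shelah lemma their number is at most $(2m)^\Dim$. Thus $\Prob{E_2} \leq (2m)^\Dim \cdot 2^{-\eps m / 2}$, and hence $\Prob{E_1} \leq 2 (2m)^\Dim \cdot 2^{-\eps m / 2}$.

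The proof concludes by verifying that the stated lower bound on $m$ forces this quantity to be at most $\BadProb$. This is a routine calculation: the $(4/\eps)\lg(4/\BadProb)$ term in the maximum absorbs the $\BadProb$ contribution, while the $(8\Dim/\eps)\lg(16/\eps)$ term ensures the polynomial factor $(2m)^\Dim$ is dominated by the remaining half of the exponential $2^{-\eps m /2}$. The only delicate step is the second one, where one must justify the switch to a uniform partition of $Z$ and then invoke Sauer-Shelah to replace the potentially infinite $\RangeSet$ by the polynomial count $(2m)^\Dim$; everything else is a Chernoff-type bound or elementary algebra.
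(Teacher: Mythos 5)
This theorem is stated in the paper purely as background, with a pointer to \cite{hw-ensrq-87}; the paper itself contains no proof of it, so there is no internal argument to compare yours against. That said, your sketch is the classical double-sampling (symmetrization) proof of Haussler and Welzl, and its outline is sound: the reduction $\Prob{E_1} \leq 2\Prob{E_2}$ via a Chebyshev-type bound on the binomial variable $\cardin{\range \cap N_2}$ (which has mean $\geq \eps m$), the reinterpretation of $(N_1,N_2)$ conditioned on the multiset $Z$ as a uniformly random split into two size-$m$ halves, the hypergeometric estimate $\binom{2m-k}{m}/\binom{2m}{m} \leq 2^{-k}$ with $k \geq \eps m/2$ for an $E_2$-witness, and the Sauer--Shelah reduction of the union bound from all of $\RangeSet$ to the at most $O\bigl((2m)^\Dim\bigr)$ sets in $\VCProj{\RangeSet}{Z}$ are precisely the steps of the standard argument. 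The only mildly delicate points are (i) that the symmetrization step needs $\eps m$ to be at least a small absolute constant (here $\eps m \geq 8$, which the first term of the max supplies once $\BadProb < 1$), and (ii) the closing arithmetic showing that the second term of the max makes $\eps m/4$ dominate $\Dim \lg(2m)$; you flagged both, and both are routine bookkeeping. So the proposal is correct; it simply supplies a proof the paper deliberately omits.
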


The following is a slight strengthening of the result of Vapnik and
Chervonenkis \cite{vc-ucrfe-71} -- see \cite[Theorem 7.13]{h-gaa-11}.

\begin{theorem}[$\eps$-sample theorem]
    \thmlab{epsilon:approximation}%
    Let $\BadProb, \eps> 0$ be parameters and let
    $ \bigl(\GroundSet, \RangeSet \bigr)$ be a range space with \VC
    dimension $\Dim$.  Let $\FGroundSet \subset \GroundSet$ be a
    finite subset.  A sample of size
    \begin{math}
        O\pth{ \eps^{-2}\pth{\Dim + \log{\BadProb}^{-1}}}
    \end{math}
    from $\FGroundSet$ is an $\eps$-sample for
    $\RangeSpace = \pth{\FGroundSet, \RangeSet}$ with probability
    $\geq 1-\BadProb$.
\end{theorem}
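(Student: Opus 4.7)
The plan is to follow the classical symmetrization argument of Vapnik and Chervonenkis, with a chaining-style refinement to match the additive form $\eps^{-2}(\Dim + \log \BadProb^{-1})$ rather than the looser multiplicative bound $\eps^{-2}(\Dim \log \eps^{-1} + \log \BadProb^{-1})$ that drops out of the direct calculation.

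First, I would introduce a \emph{ghost sample} $N'$ of size $m$ drawn independently from $\FGroundSet$, and show it suffices to bound the probability that some $\range \in \RangeSet$ satisfies $\cardin{\sMeasureX{\range}{N} - \sMeasureX{\range}{N'}} > \eps/2$. Indeed, if $\cardin{\sMeasureX{\range}{N} - \Measure{\range}} > \eps$ for some $\range$, then since $\mathrm{Var}[\sMeasureX{\range}{N'}] \leq 1/(4m)$, Chebyshev's inequality implies that with conditional probability at least $1/2$ the ghost sample also satisfies $\cardin{\sMeasureX{\range}{N'} - \Measure{\range}} < \eps/2$, and hence $\cardin{\sMeasureX{\range}{N} - \sMeasureX{\range}{N'}} > \eps/2$. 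This reduction loses only a constant factor and requires $m = \Omega(1/\eps^2)$, which will be satisfied by the final bound.

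Next, I would condition on the multiset $A = N \cup N'$ of size $2m$, so that $(N,N')$ becomes a uniformly random partition of $A$ into two halves of size $m$. Two things combine under this conditioning. By Sauer's lemma the projected range space $\VCProj{\RangeSet}{A}$ contains at most $O\bigl( (2m/\Dim)^{\Dim} \bigr)$ distinct subsets, so a union bound over a finite collection of ranges suffices. And for any fixed $\range$, the quantity $\sMeasureX{\range}{N} - \sMeasureX{\range}{N'}$ is a centered hypergeometric-like random variable, to which a Hoeffding-type bound for sampling without replacement applies:
\begin{equation*}
    \Prob{\bigl| \sMeasureX{\range}{N} - \sMeasureX{\range}{N'} \bigr| > \eps/2}
    \;\leq\; 2\exp\bigl(-c m \eps^2\bigr).
\end{equation*}
Combining these by a union bound gives a failure probability of at most $(2m/\Dim)^{\Dim} \cdot 2\exp(-c m \eps^2)$. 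Forcing this to be $\leq \BadProb$ and solving yields the classical VC sample bound $m = O\bigl( \eps^{-2}(\Dim \log \eps^{-1} + \log \BadProb^{-1}) \bigr)$.

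The hard part --- the ``slight strengthening'' that removes the $\log \eps^{-1}$ factor --- is a chaining-style refinement. The issue is that Sauer's lemma charges $\Dim \log m$ bits per range uniformly, even for ranges with tiny measure, whose variance-based concentration is far sharper than Hoeffding's uniform tail suggests. I would partition the ranges into scales according to the value of $\Measure{\range}$, apply a Bernstein-type tail bound at each scale (exploiting $\mathrm{Var}[\sMeasureX{\range}{N}] \leq \Measure{\range}/m$ so that small-measure ranges contribute a much smaller deviation even though they are more numerous), and sum the resulting geometric series across scales. The scale summation absorbs the $\log \eps^{-1}$ factor into the additive constant and yields the claimed bound $O\bigl( \eps^{-2}(\Dim + \log \BadProb^{-1}) \bigr)$, matching the statement of \cite[Theorem~7.13]{h-gaa-11}.
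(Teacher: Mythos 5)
The paper does not prove this statement: it cites it as a known result, attributing the slight strengthening of the Vapnik--Chervonenkis bound to \cite[Theorem~7.13]{h-gaa-11} (the tight additive form itself goes back to Talagrand and to Li, Long, and Srinivasan \cite{lls-ibscl-01}). So there is no in-paper argument to compare against, and you are reconstructing a theorem the authors take off the shelf.

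Your first two stages are the standard (and correct) symmetrization: a ghost sample, conditioning on $A = N \cup N'$ to turn the problem into a random split of $2m$ items, Sauer's lemma to reduce to at most $O\bigl((2m/\Dim)^\Dim\bigr)$ trace sets, and Hoeffding for sampling without replacement. That correctly yields the looser bound $m = O\bigl(\eps^{-2}(\Dim\log\eps^{-1} + \log\BadProb^{-1})\bigr)$.

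The refinement step is where the argument breaks. You propose to stratify the ranges by the value of $\Measure{\range}$ and invoke a Bernstein tail at each scale, reasoning that small-measure ranges have small variance and that this saving, summed geometrically, absorbs the $\log\eps^{-1}$. This intuition is inverted for the $\eps$-sample problem. A range with $\Measure{\range}\approx\eps$ needs essentially no sample at all to meet the additive guarantee, and a range with $\Measure{\range}\approx 1/2$ has the \emph{largest} variance, $\Theta(1/m)$. Moreover, within a trace of size $2m$, ranges of measure near $1/2$ are also the most numerous subsets, so the Sauer union bound at that scale still costs you $\Dim\log m$ against a tail of only $\exp(-cm\eps^2)$. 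Your measure-stratification therefore reproduces the $\log\eps^{-1}$ factor at the dominant scale rather than eliminating it. The actual mechanism for removing the logarithm (Dudley/Talagrand chaining, as used in \cite{lls-ibscl-01} and \cite[Theorem~7.13]{h-gaa-11}) is different: one builds a hierarchy of $2^{-j}$-nets of the trace sets in the \emph{symmetric-difference} pseudometric $d_A(\range_1,\range_2)=\cardin{(\range_1\triangle\range_2)\cap A}/(2m)$, bounds the increment $\sMeasureX{\pi_j(\range)}{N}-\sMeasureX{\pi_{j-1}(\range)}{N}$ at level $j$ by a subgaussian tail with variance proxy $2^{-j}/m$, and sums the Dudley-type series $\sum_j\sqrt{2^{-j}\,\Dim j/m}$, whose geometric decay absorbs the $\sqrt{j}$ factor. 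You would need to replace your measure-partitioning step with this chaining over approximation nets for the proof to go through.
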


\begin{theorem}[{{{\cite{lls-ibscl-01,hs-rag-11}}}}]
    \thmlab{relative}%
    A sample $\Net$ of size
    $ O \pth{ \epsilonA^{-2}p^{-1}\pth{\Dim \log p^{-1} +
          \log{\BadProb}^{-1}}}$ from a range space with \VC dimension
    $\Dim$, is a relative $(p,\epsilonA\ts)$-approximation with
    probability $\geq 1-\BadProb$.
\end{theorem}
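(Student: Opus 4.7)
The plan is to adapt the classical symmetrization (double-sampling) argument used for the ordinary $\eps$-sample result (\thmref{epsilon:approximation}), replacing the Hoeffding/Chernoff concentration step with a Bernstein-type inequality that exploits the small variance of an indicator whose expectation is close to $\Measure{\range}$. This variance sensitivity is exactly what converts the additive $\eps$ error of the usual sample bound into the hybrid multiplicative/additive bound stated above.

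First, I would draw a ghost sample $\Net'$ of the same size $m$ independently of $\Net$, and show by the standard conditioning argument that the probability that some range $\range$ witnesses $\abs{\sMeasure{\range} - \Measure{\range}} > \epsilonA \max\pth{p, \Measure{\range}}$ is at most twice the probability that some range witnesses $\abs{\sMeasureX{\range}{\Net} - \sMeasureX{\range}{\Net'}} > (\epsilonA/2)\max\pth{p, \Measure{\range}}$. This uses that, conditioned on the first bad event, the ghost sample $\Net'$ approximates $\Measure{\range}$ to within $\epsilonA \Measure{\range}/2$ with probability at least $1/2$ by an elementary variance computation. After this symmetrization, I would condition on the multiset $M = \Net \cup \Net'$ of size $2m$, so that $\Net$ becomes a uniformly random size-$m$ subset of $M$. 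For a fixed range $\range$, writing $k = \cardin{\range \cap M}$, the count $\cardin{\range \cap \Net}$ is hypergeometric with mean $k/2$, and a Bernstein tail bound for the hypergeometric distribution gives $\Prob{\abs{\cardin{\range \cap \Net} - k/2} > t} \leq 2\exp\pth{-\Omega(t^2/k)}$ for $t \lesssim k$. Substituting $t \approx \epsilonA m \max(p, \Measure{\range})$ and using that on the bad event $k = O\pth{m \max(p, \Measure{\range})}$, the per-range failure probability drops to $\exp\pth{-\Omega(\epsilonA^2 p m)}$.

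Second, by Sauer-Shelah applied to $M$, the number of distinct intersections $\cardin{\VCProj{\RangeSet}{M}}$ is $O\pth{m^\Dim}$, so a union bound yields total failure probability at most $O\pth{m^\Dim} \exp\pth{-\Omega(\epsilonA^2 p m)}$. Solving for when this drops below $\BadProb$ gives $m = \Omega\pth{\epsilonA^{-2} p^{-1} \pth{\Dim \log p^{-1} + \log \BadProb^{-1}}}$, matching the claimed bound.

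The main obstacle is the Bernstein step: one must use a variance-sensitive inequality rather than the additive Hoeffding bound, because the permitted deviation $\epsilonA \Measure{\range}$ is as small as $\epsilonA p$ when $\Measure{\range}$ sits near the threshold. The cap $\max(p, \Measure{\range})$ in the definition of relative approximation is exactly what makes the concentration barely tight, and the variance $\sigma^2 = O(\Measure{\range})$ (rather than $O(1)$) is what removes an extra factor of $p^{-1}$ from the required sample size. The remaining ingredients --- symmetrization, conditioning on the union, and Sauer-Shelah union bounding --- are structurally identical to the proof of \thmref{epsilon:approximation}, so the novelty lives entirely in replacing the Hoeffding/Chernoff step with its Bernstein analogue and in correctly tracking the pivot scale $\max(p,\Measure{\range})$ through the symmetrization.
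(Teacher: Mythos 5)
The paper does not prove this theorem---it is stated as a cited background result from Li--Long--Srinivasan and Har-Peled--Sharir---so there is no internal argument to compare against, and your sketch must stand on its own.

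Your structural plan (symmetrize with a ghost sample, condition on the union $M$ so that $\Net$ is a uniformly random half, invoke a variance-sensitive hypergeometric tail with $\sigma^2 = \Theta(k)$ for $k = \cardin{\range \cap M}$, then union-bound over $\VCProj{\RangeSet}{M}$ via Sauer--Shelah) is the right skeleton and is close to how these results are presented. But the final step as you have written it does not deliver the stated bound. Solving $\cardin{\VCProj{\RangeSet}{M}} \cdot \exp\pth{-\Omega(\epsilonA^2 p m)} \leq \BadProb$ with $\cardin{\VCProj{\RangeSet}{M}} = O(m^\Dim)$ yields $m = O\bigl(\epsilonA^{-2}p^{-1}\pth{\Dim\log(\Dim\,\epsilonA^{-2}p^{-1}) + \log\BadProb^{-1}}\bigr)$; the claimed bound has only $\log p^{-1}$ inside, with no $\log\Dim$ and no $\log\epsilonA^{-1}$. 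Removing those extra logarithmic factors is precisely the nontrivial contribution of Li--Long--Srinivasan, and it is obtained by a chaining argument over a multiscale family of nets, not by the single-scale Sauer--Shelah union bound you invoke. So the sketch, as written, proves a slightly weaker theorem.

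Two further points you gesture at but should not understate. After conditioning on $M$ the quantity $\Measure{\range}$ is no longer a function of the randomness, so the pivot scale $\max(p,\Measure{\range})$ must be replaced by an empirical surrogate such as $\max(p, k/2m)$, and a short separate argument is needed to pass back to $\Measure{\range}$; and the ``elementary variance computation'' in the symmetrization is a Chebyshev bound that already forces $m = \Omega(\epsilonA^{-2}p^{-1})$ before any union-bounding begins, so the constant there must be tracked to avoid circularity in the final solve for $m$.
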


The following is a standard statement on the \VC dimension of a range
space formed by mixing several range spaces together (see
\cite{h-gaa-11}).

\begin{lemma}
    \lemlab{mixing:range:spaces}%
    Let
    $\RangeSpace_1 = (\GroundSet_1, \Ranges_1), \ldots, \RangeSpace_k
    = (\GroundSet, \Ranges_k)$ be $k$ range spaces, where all of them
    have the same \VC dimension $\Dim$. Consider the new set of ranges
    \begin{math}
        \widehat{\Ranges}%
        =%
        \Set{r_1 \cap \ldots \cap r_k}{r_1 \in R_1, \ldots, r_k \in
           R_k}.
    \end{math}
    Then the range space
    $\widehat{\RangeSpace} = (\GroundSet, \widehat{\Ranges})$ has \VC
    dimension $O(\Dim k \log k)$.
\end{lemma}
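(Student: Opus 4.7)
The plan is the standard Sauer--Shelah counting argument: bound the projection size of $\widehat{\RangeSpace}$ on a finite set, then observe that for a set to be shattered the projection must be as large as $2^n$, and solve the resulting inequality.

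Fix a finite subset $Y \subseteq \GroundSet$ with $\cardin{Y} = n$. By the Sauer--Shelah lemma applied to each $\RangeSpace_i$, we have $\cardin{\VCProj{\Ranges_i}{Y}} \leq (en/\Dim)^{\Dim}$ (for $n \geq \Dim$). Every set in $\VCProj{\widehat{\Ranges}}{Y}$ has the form $(r_1 \cap Y) \cap \cdots \cap (r_k \cap Y)$ with each $r_i \cap Y \in \VCProj{\Ranges_i}{Y}$. So the number of distinct projections satisfies
\begin{equation*}
    \cardin{\VCProj{\widehat{\Ranges}}{Y}}
    \;\leq\;
    \prod_{i=1}^{k} \cardin{\VCProj{\Ranges_i}{Y}}
    \;\leq\;
    \pth{\frac{en}{\Dim}}^{k \Dim}.
\end{equation*}

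If $Y$ is shattered by $\widehat{\RangeSpace}$, then $\cardin{\VCProj{\widehat{\Ranges}}{Y}} = 2^n$, so we must have $2^n \leq (en/\Dim)^{k\Dim}$, equivalently
\begin{equation*}
    n \;\leq\; k\Dim \log_2\pth{\frac{e n}{\Dim}}.
\end{equation*}
A direct calculation shows this forces $n = O(\Dim k \log k)$: plugging in $n = c \Dim k \log k$ for a sufficiently large constant $c$, the right-hand side becomes $k\Dim (\log_2(ck\log k) + O(1)) = O(k\Dim \log k)$, which is strictly smaller than $n$. Hence no set of this size can be shattered, yielding the claimed bound on $\VCDim(\widehat{\RangeSpace})$.

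There is no real obstacle; the only mild subtlety is the bookkeeping in the last inequality, where one has to verify that the $\log(n/\Dim)$ factor collapses to $O(\log k)$ at the threshold $n \asymp \Dim k \log k$. Everything else is a direct product bound on projection sizes together with the Sauer--Shelah lemma, both of which are classical.
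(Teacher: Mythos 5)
Your argument is correct and is the standard one. The paper itself does not supply a proof of this lemma — it cites it as a known fact from \cite{h-gaa-11} — and the proof given in that reference is exactly the Sauer--Shelah projection-count argument you reproduce: bound $\cardin{\VCProj{\widehat{\Ranges}}{Y}}$ by the product $\prod_i \cardin{\VCProj{\Ranges_i}{Y}} \leq (en/\Dim)^{k\Dim}$, compare against $2^n$, and solve for $n$. The only caveat, which you already flag, is that the closing inequality $n \leq k\Dim\log_2(en/\Dim) \Rightarrow n = O(\Dim k\log k)$ should be checked once with an explicit constant (e.g.\ $n = c\Dim k\log_2 k$ with $c$ moderately large and $k \geq 2$ makes the right-hand side $k\Dim\bigl(\log_2 k + \log_2\log_2 k + O(1)\bigr) < n$), and the $k=1$ case is degenerate under the $O(\Dim k\log k)$ notation and should be handled by the trivial bound $\VCDim(\widehat{\RangeSpace}) = \Dim$.
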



\subsection{Weak $\eps$-nets}

A convex body $\body \subseteq \Re^d$ is \emphi{$\eps$-heavy} (or just
\emph{heavy}) if $\Measure{\body} \geq \eps$ (i.e.,
$\cardin{\body \cap \PSet} \geq \eps \cardin{\PSet}$). Otherwise,
$\body$ is \emphi{$\eps$-light}.

\begin{definition}[Weak $\eps$-net]
    \deflab{weak:e:net}%
    Let $\PSet$ be a set of $n$ points in $\Re^d$. A finite set
    $\Net \subset \Re^d$ is a \emphi{weak $\eps$-net} for $\PSet$ if
    for any convex set $\body$ with $\Measure{\body} \geq \eps$, we
    have $\Net \cap \body \neq \varnothing$.
\end{definition}

Note, that like (regular) $\eps$-nets, weak $\eps$-nets have one-sided
error -- if $\body$ is heavy then the net must stab it, but if $\body$
is light then the net may or may not stab it.

\subsection{Centerpoints}

Given a set \PSet of $n$ points in $\Re^d$, and a constant
$\alpha \in (0,1/(d+1)]$, a point $\cen \in \Re^d$ is an
\emphi{$\alpha$-centerpoint} if for any closed halfspace that contains
$\cen$, the halfspace also contains at least $\alpha n$ points of
$\PSet$.  It is a classical consequence of Helly's theorem that a
$1/(d+1)$-centerpoint always exists. If a point $\cen \in \Re^d$ is a
$1/(d+1)$-centerpoint for $\PSet$, we omit the $1/(d+1)$ and simply
say that $\cen$ is a centerpoint for $\PSet$.


\section{Approximating the centerpoint via %
   Radon's urn}
\seclab{approx:centerpoint}

We revisit the algorithm of Clarkson \etal \cite{cemst-acpir-96} for
approximating a centerpoint. We give a variant of their algorithm, and
present a different (and we believe cleaner) analysis of the
algorithm. In the process we improve the running time from being
roughly $\tldO\pth{d^9}$ to $\tldO \pth{d^7}$, and also improve the
quality of centerpoint computed.

\subsection{Radon's urn}

\subsubsection{Setup}

In the \emphi{Radon's urn} game there are $r$ red balls, and $b = n-r$
blue balls in an urn, and there is a parameter $t$. An iteration of
the game goes as follows: 
\begin{compactenumA}
    \item The player picks a random ball, marks it for deletion, and
    returns it to the urn.

    \item The player picks a sample $S$ of $t$ balls (with replacement
    -- which implies that we might have several copies of the same
    ball in the sample) from the urn.

    \item If at least two of the balls in the sample $S$ are red, the
    player inserts a new red ball into the urn. Otherwise, the player
    inserts a new blue ball.

    \item The player returns the sample to the urn.

    \item Finally, the player removes the ball marked for deletion
    from the urn.
\end{compactenumA}

Note that in each stage of the game, the number of balls in the urn
remains the same. We are interested in how many rounds of the game one
has to play till there are no red balls in the urn with high
probability. Here, the initial value of $r$ (i.e., $r_0$) is going to
be relatively small compared to $n$.

\subsection{Analysis}

Let $P(r)$ be the probability of picking two or more red balls into
the sample, assuming that there are $r$ red balls in the urn. We have
that
\begin{equation*} 
  P(r)%
  =%
  \sum_{i=2}^{t} \binom{t}{i} \pth{\frac{r}{n}}^i
  \pth{1-\frac{r}{n}}^{t-i}%
  \leq%
  \binom{t}{2} \pth{\frac{r}{n}}^2
  \leq%
  \frac{t^2}{2} \pth{\frac{r}{n}}^2. %
\end{equation*}
(For the proof of the first inequality, see \apndref{silly}.)  Note,
that $P(r) \leq 1/8$ if $n \geq 2tr$.  Let $P_+(r)$ be the probability
that the number of red balls increased in this iteration. For this to
happen, at least two red balls had to be in the sample, and the
deleted ball must be blue.  Let $P_-(r)$ be the probability that the
number of red balls decreases -- the player needs to pick strictly
less than two red balls in the sample, and delete a red ball. This
implies
\begin{equation*}
  P_+(r) = P(r) (1-r/n) \leq P(r) %
  \qquad \text{ and } \qquad%
  P_{-}(r) = (1-P(r))(r/n).
\end{equation*}
The probability for a change in the number of red balls at this
iteration is
\begin{equation*}
  P_{\pm}(r)%
  =%
  P_+(r) + P_-(r)%
  =%
  P(r) (1-r/n) + (1-P(r))(r/n)%
  =%
  (1-2r/n)P(r) + r/n.
\end{equation*}

\SaveContent{\LemmaGoodRange}{%
   Let $\epsW \in (0, 1/4)$ be a parameter.  If $r \leq \rmax$, then
  \begin{align*}
      P_+(r) \leq \frac{1/2-\epsW}{1/2+\epsW} P_{-}(r), \qquad \text{ where }\qquad
      \rmax =
      (1 - 2\epsW)\frac{n}{t^2}.
  \end{align*}
}%
\begin{lemma}(Proof in \apndref{good:range:proof}.) %
    \lemlab{good:range}%
    \LemmaGoodRange
\end{lemma}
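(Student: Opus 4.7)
The plan is to bound $P_+(r)$ from above and $P_-(r)$ from below separately using the same key observation, and then divide. The whole argument will rest on translating the hypothesis $r \leq \rmax$ into the single clean inequality $t^2 r/n \leq 1 - 2\epsW$, and pairing it with the estimate $P(r) \leq (t^2/2)(r/n)^2$ derived just above the lemma.

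First I would bound $P_+(r)$. Starting from the identity $P_+(r) = P(r)(1-r/n)$ and the trivial estimate $1 - r/n \leq 1$, I get $P_+(r) \leq P(r) \leq (t^2/2)(r/n)^2$. The key step is to rewrite this last quantity as $(1/2)(r/n)\cdot(t^2 r/n)$, so that the hypothesis $r \leq \rmax$ forces the second factor to be at most $1-2\epsW$. This yields
\begin{equation*}
P_+(r) \;\leq\; \tfrac{1}{2}(1-2\epsW)\cdot (r/n) \;=\; (1/2-\epsW)(r/n).
\end{equation*}

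Next I would bound $P_-(r) = (1-P(r))(r/n)$ from below, which reduces to showing $P(r) \leq 1/2 - \epsW$. Using the same manipulation together with the bound $r/n \leq (1-2\epsW)/t^2 \leq 1$ (which follows from $r \leq \rmax$ and $t\geq 1$), I get $P(r) \leq (1/2)(r/n)(t^2 r/n) \leq (1/2)(1)(1-2\epsW) = 1/2 - \epsW$. Consequently $1 - P(r) \geq 1/2 + \epsW$, and therefore
\begin{equation*}
P_-(r) \;\geq\; (1/2+\epsW)(r/n).
\end{equation*}

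Dividing the two bounds cancels the common factor $r/n$ and gives the claim. There is no real obstacle here: everything is a short chain of elementary inequalities, and the assumption $\epsW < 1/4$ (in particular $\epsW < 1/2$) is only needed to ensure that $1/2 + \epsW$ is positive so that the resulting ratio is meaningful. The small conceptual point is that the \emph{same} inequality $t^2 r/n \leq 1-2\epsW$ furnishes both the factor $(1-2\epsW)$ in the numerator estimate and the factor $(1+2\epsW)$ in the denominator estimate, which is precisely what produces the target ratio $(1/2-\epsW)/(1/2+\epsW)$.
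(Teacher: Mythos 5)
Your proof is correct, and it rests on the same core observation as the paper's: the hypothesis $r \leq \rmax$ is exactly the inequality $t^2 r/n \leq 1-2\epsW$, which combined with $P(r) \leq (t^2/2)(r/n)^2$ yields $P(r) \leq (1/2-\epsW)(r/n)$, and this single bound controls both $P_+$ and $1-P$. The paper's write-up works backward from the target inequality through a chain of equivalences and then verifies a sufficient condition, whereas you bound the numerator and denominator directly and divide; this is a purely cosmetic difference, and if anything your presentation is the cleaner of the two. (The small side fact $r/n \leq 1$ that you invoke for the $P_-$ bound, via $t\geq 1$, is harmless and is implicitly used in the paper's last step as well.)
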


To simplify exposition, we choose $\epsW = 1/6$ so that $P_+(r)
\leq P_-(r)/2$. In the remaining analysis, the proofs involving tedious
calculations are given in \apndref{missing:proofs:approx:centerpoint},
with the modification that the statement of the lemmas and proofs
involve the parameter $\epsW$.

\SaveContent{\LemmaGoodRangeProof}{
  We solve
  $P(r) \leq \frac{1-2\epsW}{1+2\epsW} P_{-}(r) \leq (1-2\epsW)
  P_{-}(r)$, which is equivalent to
  \begin{equation*}
      P(r) \leq
      \bigl(1-P(r)\bigr)\frac{r}{n}  (1 -2 \epsW)
      \iff%
      \frac{n}{ (1 - 2\epsW)r} P(r) \leq 1-P(r)
      \iff%
      P(r)
      \leq \frac{(1 - 2\epsW)r}{(1 - 2\epsW)r + n}.
  \end{equation*}
  
  Now the last inequality holds if
  \begin{math}
      \frac{t^2r^2}{2n^2} \leq \frac{(1 - 2\epsW)r}{2n}
  \end{math}
  since
  \begin{equation*}
      P(r) \leq
      \frac{t^2r^2}{2n^2}
      \leq
      \frac{(1 - 2\epsW)r}{2n}
      \leq
      \frac{(1 - 2\epsW)r}{(1 - 2\epsW)r + n}.
  \end{equation*}
  Namely, the desired condition holds when 
  \begin{math}
      r \leq {(1 - 2\epsW)n}/{t^2}.
  \end{math}
}

\paragraph*{What are we analyzing?}
The value $\rmax/n$ is an upper bound on the ratio of red balls that the urn can
have and still, with good probability, end with zero red balls at the end of the
game. If this ratio is violated anytime during the game, then the urn might end
up consisting of only red balls. We want to start the game with an urn initially
having close to $\rmax$ red balls, but still end up with an entirely blue urn
with sufficiently high probability.

\paragraph*{The question.}
Let $\epsA \in (0,1)$ and $r_0 = (1-\epsA)\rmax$ be the number of red balls in
the urn at the start of the game (note that both $r_0$ and $\rmax$ are functions
of $n$ and $t$). Let $\BadProb > 0$ be a parameter. The key question is the
following: How large does $n$ need to be so that if we start with $r_0$ red
balls (and $n-r_0$ blue balls), the game ends with all balls being blue with
probability $\geq 1 - \BadProb$?


\paragraph*{The game as a random walk.}
An iteration of the game where the number of red balls changes is an
\emphi{effective} iteration. Considering only the effective iterations, this can
be interpreted as a random walk starting at $X_0 = (1-\epsA)\rmax$ and at every
iteration either decreasing the value by one with probability at least $2/3$,
and increasing the value with probability at most $1/3$ (since $P_+(r) \leq
P_-(r)/2$).  This walk ends when either it reaches $0$ or $\rmax$. If the walks
reaches $\rmax$, then the process fails. Otherwise if the walk reaches $0$, then
there are no red balls in the urn, as desired.

\subsubsection{Analyzing the  related walk}
Consider the random walk that starts at $Y_0 = (1-\epsA)\rmax$. In the
$i$\th iteration, $Y_i = Y_{i-1} -1$ with probability $2/3$ and
$Y_i = Y_{i-1} +1$ with probability $1/3$. Let
$\YY = Y_1, Y_2, \ldots$ be the resulting random walk which
stochastically dominates the walk $X_0, X_1, \ldots$. This walk is
strongly biased towards going to $0$, and as such it does not hang
around too long before moving on, as testified by the following lemma.

\SaveContent{\LemmaNumberVisits}
{
  Let $I$ be any integer number, and let $\BadProb > 0$ be a
  parameter. The number of times the random walk $\YY$ visits $I$ is
  at most
  \begin{math}
      \displaystyle%
      \frac{1}{4\epsW^2 }\ln \frac{1}{4 \epsW^2 \BadProb} 
  \end{math}
  times, and this holds with probability $\geq 1-\BadProb$.
}

\begin{lemma}(Proof in \apndref{number:visits:proof}.) %
    \lemlab{number:visits}%
  Let $I$ be any integer number, and let $\BadProb > 0$ be a
  parameter. The number of times the random walk $\YY$ visits $I$ is
  at most
  \begin{math}
      \displaystyle%
      9\ln\pth{9/\BadProb}
  \end{math}
  times, and this holds with probability $\geq 1-\BadProb$.
\end{lemma}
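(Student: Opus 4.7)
The plan is to treat $V_I$, the total number of visits of $\YY$ to the integer $I$, as a (truncated) geometric random variable whose parameter is the biased walk's return-to-$I$ probability. The main technical step is to compute this return probability $q$ by first-step analysis, which amounts to two standard gambler's-ruin calculations. Conditioning on the first step from $I$: with probability $p_\mathrm{up} = 1/3$ the walk moves to $I+1$, and since $p_\mathrm{down} > p_\mathrm{up}$ the walk is drifted back toward $I$---solving the hitting recurrence $p_\mathrm{up} h(k+1) - h(k) + p_\mathrm{down} h(k-1) = 0$ (general solution $A + B(p_\mathrm{down}/p_\mathrm{up})^k$; boundedness of the hitting probability forces $B=0$) shows the walk almost surely reaches $I$. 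With probability $p_\mathrm{down} = 2/3$ the walk moves to $I-1$, and the mirror calculation (now requiring the hitting probability to vanish at $-\infty$) yields a return probability of $p_\mathrm{up}/p_\mathrm{down} = 1/2$. Summing gives $q = (1/3)(1) + (2/3)(1/2) = 2/3$, matching the general formula $q = 1 - 2\epsW$ for $\epsW = 1/6$.

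Next I would invoke the strong Markov property: each visit to $I$ is followed by an independent Bernoulli$(q)$ trial determining whether the walk ever returns to $I$. Hence, conditional on the walk reaching $I$ at all, the number of subsequent visits is geometric with parameter $1-q$, and unconditionally
\begin{equation*}
  \Prob{V_I \geq k} \;\leq\; q^{k-1} \;=\; (2/3)^{k-1}
\end{equation*}
for every integer $k \geq 1$, regardless of whether the walk ever reaches $I$.

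To finish, solving $(2/3)^{k-1} \leq \BadProb$ using the loose but convenient bound $1/\ln(3/2) \leq 9$ gives $k - 1 \geq 9 \ln(1/\BadProb)$, so the choice $k = 9 \ln(9/\BadProb)$ works with room to spare (the $9 \ln 9 > 1$ term absorbs the stray $+1$). I expect the pair of gambler's-ruin computations underlying $q$ to be the main technical ingredient; the geometric-tail estimate afterwards is entirely routine. Note that this analysis in fact gives a bound of shape $O((1/\epsW)\log(1/\BadProb))$, so the stated bound $(1/(4\epsW^2)) \ln(1/(4\epsW^2 \BadProb))$ has considerable slack; any uniform constants can be chosen to make the final inequality transparent.
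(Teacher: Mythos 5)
Your proof is correct, and it takes a genuinely different route from the paper's. The paper's argument is a first-moment bound in disguise: letting $\tau$ be the first hitting time of $I$, it bounds
$\Prob{Y_{\tau+2i}=I} \leq \binom{2i}{i}\bigl(\tfrac14-\epsW^2\bigr)^i \leq (1-4\epsW^2)^i$
and sums this geometric series from index $u$ onward to control the probability of any visit after time $\tau+2u$, giving the threshold $u \geq \frac{1}{4\epsW^2}\ln\frac{1}{4\epsW^2\BadProb}$, which with $\epsW=1/6$ is exactly the stated $9\ln(9/\BadProb)$. You instead compute the exact single-excursion return probability $q = 2(\tfrac12 - \epsW) = 1-2\epsW$ by two gambler's-ruin calculations (from $I+1$ the drift points back toward $I$, so return is almost sure; from $I-1$ the return probability is $p_{\mathrm{up}}/p_{\mathrm{down}}$), and then use the strong Markov property to conclude that the visit count is stochastically dominated by a geometric random variable with success parameter $1-q = 2\epsW$, so that $\Prob{V_I \geq k} \leq (1-2\epsW)^{k-1}$. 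Your route is both cleaner and quantitatively sharper: it yields a threshold of order $\epsW^{-1}\log\BadProb^{-1}$ instead of the paper's $\epsW^{-2}\log(\epsW^{-2}\BadProb^{-1})$, and for $\epsW=1/6$ the stated constant follows with ample slack since $1/\ln(3/2) < 3 \ll 9$. The one thing I would spell out in a polished write-up is the standard step identifying the hitting probability as the bounded (equivalently, minimal nonnegative) solution of the recurrence, which is what licenses discarding the $B(p_{\mathrm{down}}/p_{\mathrm{up}})^k$ term; but as hitting probabilities live in $[0,1]$, your boundedness appeal already handles this.
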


\SaveContent{\LemmaNumberVisitsProof}
{
    Let $\tau$ be the first index such that $Y_\tau = I$. The
    probability that $Y_{\tau+2i} = I$ is
    \begin{align*}
      p_{\tau+i} = 
      \binom{ 2i}{ i} \pth{\frac{1}{2} - \epsW}^{i} \pth{\frac{1}{2} +
      \epsW}^{i}%
      \leq%
      2^{2i} \pth{\frac{1}{4} -\epsW^2}^i
      =%
      \pth{1 -4\epsW^2}^i.
    \end{align*}
    The probability that the walk would visit $I$ again after time
    $\tau+u$ is bounded by 
    \begin{math}
        \nu(u) = \sum_{i=u}^\infty p_{\tau+i} \leq {p_{\tau+u}}/
        \pth{4\epsW^2}.
    \end{math}
    We want to choose $u$ so that $\nu(u) \leq \BadProb$.
    In particular, 
    $\nu(u) \leq (1 - 4\epsW^2)^u/(4 \epsW^2)
            \leq \exp\pth{ - 4 \epsW^2 u }/(4\epsW^2) \leq \BadProb$
    for 
    \begin{math}
        u \geq \frac{1}{4\epsW^2 }\ln \frac{1}{4 \epsW^2 \BadProb}.
    \end{math}
    Thus the walk might visit $I$ during the first $u$ iterations, but
    with probability $\geq 1-\BadProb$ it never visits it again. We
    conclude, that with probability $\geq 1-\BadProb$, the walk visits
    the value $I$ at most $u$ times.
}

We next bound the probability that the walk fails.

\SaveContent{\LemmaHBound}{
  Let $\BadProb > 0$ be a parameter.  If
  \begin{math}
      \displaystyle%
      \rmax \geq \frac{1}{ 2 \epsW \epsA} \ln
      \frac{1}{4\epsW^2\BadProb}
  \end{math}
  then the probability that the random walk ever visits $\rmax$ (and
  thus fails) is bounded by $\BadProb$.
}

\begin{lemma}(Proof in \apndref{H:bound:proof}.) %
    \lemlab{H:bound}%
    Let $\BadProb > 0$ be a parameter. If
    \begin{math}
        \displaystyle%
        \rmax \geq \frac{3}{\epsA} \ln \frac{9}{\BadProb}
    \end{math}
    then the probability that the random walk ever visits $\rmax$ (and
    thus fails) is bounded by $\BadProb$.
\end{lemma}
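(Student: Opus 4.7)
The plan is to reduce the question to a textbook gambler's-ruin computation for the dominating walk $\YY$. Since replacing $\YY$ with the version absorbed at $0$ as well as at $\rmax$ can only decrease the probability of ever visiting $\rmax$, it suffices to bound the probability that the walk, restricted to $\{0, 1, \ldots, \rmax\}$ with absorbing endpoints, exits at $\rmax$.

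The key observation is that $M_n := 2^{Y_n}$ is a martingale, since
\[
    \Ex{2^{Y_{n+1}} \mid Y_n}
    = 2^{Y_n}\pth{\tfrac{1}{3}\cdot 2 + \tfrac{2}{3}\cdot \tfrac{1}{2}}
    = 2^{Y_n}.
\]
The exit time $\tau$ from this interval is almost surely finite (with bounded expectation, because the walk has constant downward drift inside a bounded interval), and $M_n$ is uniformly bounded by $2^{\rmax}$, so the optional stopping theorem applies and yields
\[
    2^{Y_0}
    = \Ex{2^{Y_\tau}}
    = \Prob{Y_\tau = 0} + \Prob{Y_\tau = \rmax}\cdot 2^{\rmax}.
\]
Using $Y_0 \leq (1-\epsA)\rmax$, this rearranges to
\[
    \Prob{\YY \text{ ever visits } \rmax}
    \leq \frac{2^{(1-\epsA)\rmax}-1}{2^{\rmax}-1}
    \leq 2\cdot 2^{-\epsA\rmax},
\]
valid for $\rmax \geq 1$.

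To finish, I would verify that the hypothesis forces $2\cdot 2^{-\epsA\rmax} \leq \BadProb$. The assumption $\rmax \geq (3/\epsA)\ln(9/\BadProb)$ gives $\epsA \rmax \geq 3\ln 9 + 3\ln \BadProb^{-1}$; combined with the elementary inequalities $3 > 1/\ln 2$ and $3\ln 9 > 1$, this implies $\epsA \rmax \geq 1 + \log_2 \BadProb^{-1}$, which is exactly what is required. The argument is otherwise entirely routine; the only mild obstacle is keeping the constants aligned (and the harmless rounding of $(1-\epsA)\rmax$ to an integer starting point, which is absorbed by the monotonicity of the first-passage probability in $Y_0$). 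An alternative that bypasses optional stopping is to solve the two-term recurrence $p_y = \tfrac{1}{3} p_{y+1} + \tfrac{2}{3} p_{y-1}$ with boundary values $p_0 = 0$, $p_{\rmax} = 1$ (characteristic roots $1$ and $2$), recovering the same formula $p_y = (2^y-1)/(2^{\rmax}-1)$ by entirely elementary means.
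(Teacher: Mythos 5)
Your gambler's-ruin / martingale approach is a genuinely different route from the paper's proof. The paper directly computes, for each $i \geq 0$, an upper bound on $\Prob{Y_{\epsA\rmax + 2i} = \rmax}$ via a crude $\binom{n}{k} \leq 2^n$ estimate and sums the geometric series, obtaining $\exp(-2\epsW\epsA\rmax)/(4\epsW^2)$. Your exit-probability calculation is cleaner and even a bit tighter (it loses no extra $1/(4\epsW^2)$ factor).

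However, there is a genuine gap in your first step. You write that ``replacing $\YY$ with the version absorbed at $0$ as well as at $\rmax$ can only \emph{decrease} the probability of ever visiting $\rmax$, [so] it suffices to bound'' the absorbed exit probability. The monotonicity is exactly right — absorbing at $0$ discards sample paths that dip below $0$ and later climb to $\rmax$ — but it points the wrong way: if the absorbed probability is \emph{smaller}, then an upper bound on it does not give an upper bound on the quantity the lemma asserts (the probability that the unrestricted walk $\YY$ ever visits $\rmax$). The remedy is cheap: either (a) observe that the urn game genuinely halts once the walk hits $0$ (since $P(0) = 0$, no red balls are ever created again), so the absorbed version \emph{is} the quantity of interest in the application, or better, (b) compute the unabsorbed hitting probability directly. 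For the biased walk with $p = 1/3$, $q = 2/3$ on all of $\mathbb{Z}$, the probability of ever gaining $k$ levels is $(p/q)^k$, so
\[
    \Prob{\YY \text{ ever visits } \rmax} = \pth{\tfrac{p}{q}}^{\epsA\rmax} = 2^{-\epsA\rmax},
\]
which one can also extract from your optional-stopping argument by letting the lower absorbing barrier go to $-\infty$. This is even slightly smaller than the $2 \cdot 2^{-\epsA\rmax}$ you derived, so your final numerical verification that the hypothesis $\rmax \geq (3/\epsA)\ln(9/\BadProb)$ forces this below $\BadProb$ goes through verbatim (in fact with a bit of slack). So the bound you obtain is correct — it is only the intermediate justification that needs to be flipped.
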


\SaveContent{\LemmaHBoundProof}{%
   Since the walk starts at $(1-\epsA)\rmax$, the first time it can
   arrive to $\rmax$ is at time $\epsA \rmax$. In particular, the
   probability that $Y_{\epsA \rmax + 2i} = \rmax$ is
   \begin{align*}
      p_i'%
      &=%
        \binom{\epsA \rmax + 2i}{\epsA \rmax + i}
        \pth{\frac{1}{2} - \epsW}^{\epsA\rmax+i}
        \pth{\frac{1}{2} + \epsW}^{i}%
        \leq %
        2^ {\epsA \rmax + 2i}\pth{\frac{1}{2} - \epsW }^{\epsA \rmax+i}
        \pth{\frac{1}{2} + \epsW}^{i}%
        =%
        \pth{1 - 2\epsW}^{\epsA \rmax+i}
        \pth{1 + 2\epsW}^{i}%
      \\&%
      = 
      \pth{1 - 2\epsW}^{\epsA \rmax}
      \pth{1 - 4\epsW^2}^{i}.%
    \end{align*}
    It follows that the probability of failure is bounded by
    \begin{math}
        p = \sum_{i=0}^\infty p_i' \leq {\pth{1 - 2\epsW}^{\epsA
              \rmax}} /\pth{4\epsW^2} 
            \leq \exp(-2\epsW\epsA \rmax)/(4\epsW^2).
    \end{math}
    In particular, we require that
    \begin{math}
        p \leq \exp\pth{ - 2\epsW \epsA
           \rmax}/ \pth{4\epsW^2} \leq \BadProb,
    \end{math}
    which holds for
    \begin{math}
        \displaystyle%
        \rmax \geq \frac{1}{ 2 \epsW \epsA} \ln
        \frac{1}{4\epsW^2\BadProb}.
    \end{math}
}

\subsubsection{Back to the urn}

The number of red balls in the urn is stochastically dominated by the
random walk above.  The challenge is that the number of iterations one
has to play before an effective iteration happens (thus, corresponding
to one step of the above walk), depends on the number of red balls,
and behaves like the coupon collector problem. Specifically, if there
are $r \leq \rmax$ red balls in the urn, then the probability for an
effective step is $P_\pm(r) \geq (1-P(r))(r/n) \geq r/2n$, as
$P(r) \leq 1/2$.  This implies that, in expectation, one has to wait
at most $2n/r$ iterations before an effective iteration happens.

\SaveContent{\LemmaStuckOnR}{ Let $\BadProb > 0$ and
   $\epsW \in (0,1/4)$ be parameters.  For any value $r \leq \rmax$,
   the urn spends at most
   $s(n,r) = O\pth{ (n/r) \epsW^{-2} \ln (\epsW^{-1} \BadProb^{-1})}$
   regular iterations, throughout the game, having exactly $r$ balls
   in it, with probability $\geq 1-\BadProb$.  }

\SaveContent{\LemmaStuckOnRProof}{
    Consider the iteration that the urn has $r$ red balls. As discussed,
    one has to wait in expectation at most $2n/r$ steps before
    the number of red balls changes. Now by Markov's inequality,
    the probability this takes more than $4n/r$ steps is at most half.
    Namely, if the number of iterations till an effective iteration is
    denoted by $\Delta$, then let $X = \ceil{\Delta/(4n/r)}$ be the
    number of blocks we had to wait till an effective iteration
    happens. It is easy to verify that $X$ has a geometric
    distribution, with $p=1/2$.

    Hence, every run of iterations with the urn having $r$ red
    balls, corresponds to a random variable of such blocks. Let
    $X_1, X_2, \ldots, X_m$ be the number of such blocks, in the first
    $m$ such runs, where we can take
    $m =O\pth{\epsW^{-2} \ln (\epsW^{-2} \BadProb^{-1})}$
    by \lemref{number:visits}.  It is easy to verify that
    $\Ex{\sum_i X_i} = 2m$. Chernoff type inequalities for the sum of
    such geometric variables show that this sum is smaller than, say,
    $8m$ with probability $2^{-3m}$ (see \corref{geometric:sum}
    below). As for the number of iterations the urn spends having
    exactly $r$ balls in it -- this is bounded by $(2n/r) \sum_i X_i$.
}

\begin{lemma}(Proof in \apndref{stuck:on:r:proof}.) %
    \lemlab{stuck:on:r}%
    Let $\BadProb > 0$ be the probability of failure. For
    any value $r \leq \rmax$, the urn spends at most
    $O\pth{ (n/r) \log ( \BadProb^{-1})}$ regular
    iterations, throughout the game, having exactly $r$ balls in it,
    with probability $\geq 1-\BadProb$.
\end{lemma}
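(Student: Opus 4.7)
The plan is to decompose the total ``time at level $r$'' into (i) the number of times the underlying random walk on the red-ball count visits $r$, and (ii) the number of urn iterations spent between consecutive effective iterations while at level $r$. Step (i) is already controlled by \lemref{number:visits}, so the real work is step (ii), which is essentially a waiting-time / coupon-collector argument combined with concentration of a sum of geometric random variables.

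First I would observe that, as noted just before the lemma, whenever the urn has exactly $r$ red balls (with $r \leq \rmax$) the probability that an iteration is effective is
\begin{math}
P_\pm(r) \geq (1-P(r))(r/n) \geq r/(2n),
\end{math}
since $P(r)\leq 1/2$ in our regime. Thus the number $\Delta$ of iterations until the next effective iteration is stochastically dominated by a geometric random variable with success probability $r/(2n)$, and in particular $\Ex{\Delta}\leq 2n/r$. By Markov's inequality, the probability that $\Delta > 4n/r$ is at most $1/2$. Partitioning the wait into blocks of length $4n/r$, the number $X$ of blocks needed is therefore stochastically dominated by a geometric random variable with parameter $p=1/2$.

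Next, by \lemref{number:visits} (applied with failure probability $\BadProb/2$) the random walk visits level $r$ at most $m = O(\log \BadProb^{-1})$ times with probability $\geq 1-\BadProb/2$. Each visit contributes an independent copy of the block-count $X$ above, so the total number of blocks is $X_1 + \cdots + X_m$ with each $X_i$ geometric with $p=1/2$ and $\Ex{\sum_i X_i} = 2m$. A standard Chernoff bound for sums of geometric random variables (to be invoked as \corref{geometric:sum}) gives $\Pr[\sum_i X_i > 8m] \leq 2^{-3m}$; choosing the constant in $m$ large enough makes this at most $\BadProb/2$. Combining the two failure events via a union bound, with probability $\geq 1-\BadProb$ the urn spends at most
\begin{math}
(4n/r)\cdot \sum_i X_i \;=\; O\pth{(n/r)\log \BadProb^{-1}}
\end{math}
iterations with exactly $r$ red balls, which is the desired bound.

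The main obstacle I foresee is not conceptual but bookkeeping: making sure that the ``blocks'' across different visits to level $r$ are genuinely independent so that the Chernoff bound applies cleanly, and that the stochastic-domination arguments compose correctly. The cleanest way to handle this is to couple the urn process to an idealized process in which, at each visit to level $r$, one samples a fresh geometric waiting time, and to verify that this coupling only inflates the true running time. Once that is in place, the rest is just the union bound over the two failure events and absorbing constants into the $O(\cdot)$.
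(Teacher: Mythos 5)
Your proposal matches the paper's argument almost step for step: bound the effective-step probability by $r/(2n)$, use Markov to turn the waiting time into a geometric block count with $p=1/2$, invoke \lemref{number:visits} to bound the number of visits $m$ to level $r$, and then apply the Chernoff-type bound of \corref{geometric:sum} to $\sum_i X_i$ before a union bound. The only additions you make --- explicitly splitting the failure budget into $\BadProb/2 + \BadProb/2$, and flagging the independence/coupling subtlety for the per-visit block counts --- are legitimate clarifications of points the paper leaves implicit, not a different approach.
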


\SaveContent{\LemmaIterations}{
    Let $\BadProb > 0$ and $\epsA \in (0,1)$ be parameters, and
    assume that
    \begin{math}
        \displaystyle%
        n = \Omega\pth{ \frac{t^2}{ \epsW \epsA} \ln
           \frac{1}{\epsW\BadProb}}.
    \end{math}
    The
    total number of regular iterations one has to play till the urn
    contains only blue balls, is \break
    $O\pth{ n \epsW^{-2} \log n \log (n \epsW^{-1} \BadProb^{-1})}$,
    and this bound holds with probability $\geq 1- \BadProb$.
}

\SaveContent{\LemmaIterationsProof}{
    The bound on the number of steps follows readily by summing up the
    bound of \lemref{stuck:on:r}, for $r=1, \ldots,
    \rmax$. Specifically, we apply this lemma with failure probability
    $\BadProb/2n$, to bound the sum $\sum_{r=1}^\rmax s(n,r)$, which
    is bounded by the stated bound.  The probability of failure is at
    most $\rmax \BadProb /2n \leq \BadProb/2$.

    The other reason for a failure is that the urn reaches a state
    where it has $\rmax$ red balls. In particular, using
    \lemref{H:bound} to bound this probability by $\BadProb/2$,
    requires that
    \begin{math}
        \displaystyle%
        \rmax \geq \frac{1}{ 2 \epsW \epsA} \ln
        \frac{1}{4\epsW^2\BadProb/2}.
    \end{math}
    By the value specified in \lemref{good:range}, this requires
    \begin{math}
        (1 - 2\epsW)\frac{n}{t^2}
        \geq  \frac{1}{ 2 \epsW \epsA} \ln
        \frac{1}{4\epsW^2\BadProb/2},
    \end{math}
    which holds for the value of $n$ stated.
}

\begin{lemma}(Proof in \apndref{iterations:proof}.) %
    \lemlab{iterations}%
    Let $\BadProb > 0$ and $\epsA \in (0,1)$ be parameters, and
    assume that
    \begin{math}
        \displaystyle%
        n = \Omega\pth{ \frac{t^2}{ \epsA} \ln
           \frac{1}{\BadProb}}.
    \end{math}
    The
    total number of regular iterations one has to play till the urn
    contains only blue balls, is
    $O\pth{ n \log n \log (n \BadProb^{-1})}$,
    and this bound holds with probability $\geq 1- \BadProb$.
\end{lemma}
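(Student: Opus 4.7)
The plan is to combine the two building blocks already developed, namely \lemref{stuck:on:r} (which controls how long the urn lingers at each fixed number of red balls) and \lemref{H:bound} (which controls the probability that the walk ever reaches the bad boundary $\rmax$), and then tune the failure probabilities so that a union bound yields the claimed $\BadProb$.

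First I would partition the game according to the current number $r$ of red balls in the urn. Since the walk can only reach values between $0$ and $\rmax$, the total number of regular iterations is bounded by $\sum_{r=1}^{\rmax} T_r$, where $T_r$ is the number of iterations spent while the urn contains exactly $r$ red balls. Applying \lemref{stuck:on:r} with failure probability $\BadProb/(2n)$ for each $r$, and taking a union bound over the at most $\rmax \leq n$ values of $r$, gives that with probability $\geq 1 - \BadProb/2$,
\begin{equation*}
    \sum_{r=1}^{\rmax} T_r
    \;=\;
    \sum_{r=1}^{\rmax} O\!\pth{\frac{n}{r}\log \frac{n}{\BadProb}}
    \;=\;
    O\!\pth{n \log n \log \frac{n}{\BadProb}},
\end{equation*}
where the last step uses the harmonic sum $\sum_{r=1}^{\rmax} 1/r = O(\log \rmax) = O(\log n)$.

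Second, I need to ensure that the walk in fact stays below $\rmax$ throughout, so that \lemref{stuck:on:r} is applicable on every effective step. \lemref{H:bound} says that the walk fails (i.e., reaches $\rmax$) with probability at most $\BadProb/2$ provided $\rmax \geq (3/\epsA)\ln(18/\BadProb)$. Combining with the value $\rmax = (2/3) n/t^2$ from \lemref{good:range} (specialized to $\epsW = 1/6$), this condition becomes $n = \Omega((t^2/\epsA)\ln(1/\BadProb))$, exactly the hypothesis. A final union bound over the two sources of failure (the walk hitting $\rmax$, and the per-level iteration count exceeding its bound) concludes that both good events occur simultaneously with probability $\geq 1 - \BadProb$.

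The only subtle step is the union bound bookkeeping: one must be careful that the ``failure probability $\BadProb/(2n)$'' slack absorbed into the logarithm in \lemref{stuck:on:r} only inflates the iteration count by a $\log n$ factor, which is precisely where the extra $\log n$ in the final bound comes from. No new probabilistic inequality is needed; the whole argument is simply a sum-then-union-bound over the random walk analysis already in place.
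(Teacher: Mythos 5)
Your proposal is correct and follows essentially the same route as the paper's proof: sum \lemref{stuck:on:r} over $r = 1, \ldots, \rmax$ with per-level failure probability $\BadProb/(2n)$ to get the $O(n\log n \log(n\BadProb^{-1}))$ bound via the harmonic sum, and invoke \lemref{H:bound} with the value of $\rmax$ from \lemref{good:range} to bound the boundary-failure probability by $\BadProb/2$, yielding exactly the stated condition on $n$.
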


\subsection{Approximating a centerpoint}

\subsubsection{The algorithm}

Before describing the algorithm, we need the following well known
facts \cite{m-ldg-02}:
\begin{compactenumA}
    \item \textbf{Radon's theorem}: Given a set $\PSetB$ of $d+2$
    points in $\Re^d$, one can partition $T$ into two non-empty sets
    $\PSetB_1, \PSetB_2$, such that
    $\CHX{\PSetB_1} \cap \CHX{\PSetB_2} \neq \emptyset$. A point
    $\pnt \in \CHX{\PSetB_1} \cap \CHX{\PSetB_2}$ is a \emphi{Radon
       point}.
       
    \item Computing a Radon point can be done by solving a system of
    $d+1$ linear equalities in $d+2$ variables. This can be
    completed in $O(d^3)$ time using Gaussian elimination.

    \item A Radon point is a $2/(d+2)$-centerpoint of $\PSetB$.

    \item Let $h^+$ be a halfspace containing only one point of
    $\PSetB$. Then, a Radon point $\pnt$ of $\PSetB$ is contained in
    $\Re^d \setminus h^+$ \cite{cemst-acpir-96}.
\end{compactenumA}

\paragraph*{The algorithm in detail.}
Let $\PSet$ be a set of $n$ points in $\Re^d$ for which we would like
to approximate its centerpoint. To this end, let $\PSetA$ be initially
$\PSet$. In each iteration the algorithm randomly picks $d+2$
points (with repetition) from $\PSetA$, computes their Radon point,
randomly deletes any point of $\PSetA$, and inserts the new Radon point
into the point set $\PSetA$. The claim is that after a sufficient number
of iterations, any point of $\PSetA$ is a $f(d)$-centerpoint of
$\PSet$, where $f(d) =\Theta(1/d^2)$ (its exact value is specified
below in \Eqref{f:d}).

\begin{remark}
    The algorithm above is a variant of the algorithm of Clarkson
    \etal \cite{cemst-acpir-96}. Their algorithm worked in rounds, in
    each round generating $n$ new Radon points, and then replacing the
    point set with this new set, repeating this sufficient number of
    times. Our algorithm on the other hand is a ``continuous''
    process.
\end{remark}

\paragraph*{Intuition.}
A Radon point is a decent center for the points defining it. Visually,
the above algorithm causes the points to slowly migrate towards the
center region of the original point set.

To see why this is true, pick an arbitrary halfspace $h^+$ that
contains exactly $f(d) n$ points of $\PSet$. In each iteration, only
if we picked two (or more) points that are in $\PSetA \cap h^+$, the
new point might be in $h^+$. Observe that we are in the setting of
Radon's urn with $t=d+2$. Indeed, color all the points inside $h^+$ as
red, and all the points outside as blue. To apply the Radon's urn
analysis above, we require that $(1-\epsA)\rmax = f(d)n$, which by
the choice of $\rmax = (1 - 2\epsW)n/t^2$ 
in \lemref{good:range} (and recalling $\epsW = 1/6$) 
implies that
\begin{equation}
    (1-\epsA)\frac{2n}{3(d+2)^2} = f(d)n%
    \iff%
    f(d) = \frac{2(1-\epsA)}{3(d+2)^2}
    \geq \frac{1 - \epsA}{2(d+2)^2}
    \eqlab{f:d}
\end{equation}
where $\epsA \in (0,1)$.  We
can now apply the Radon's urn analysis to argue that after sufficient
number of iterations, all the points of $\PSetA$ are outside
$h^+$. Naturally, we need to apply this analysis to all halfspaces.

\subsubsection{Analysis}
\seclab{centerpoint:analysis}

Consider all half-spaces that might be of interest. To this end,
consider any hyperplane passing through $d$ points of $\PSet$, and
translate it so that it contains on one of its sides exactly $f(d) n$
points (naturally, the are two such translations). Each such
hyperplane thus defines two natural halfspaces. Let $\hset$ be the
resulting set of halfspaces. Observe that
$\cardin{\hset} \leq 2\binom{n}{d} \leq 2(ne/d)^d$. If $\PSetA$ does
not contain any point in any of the halfspaces of $H$ then all its
points are $f(d)$-centerpoints. In particular, one can think about
this as playing $\cardin{\hset}$ parallel Radon's urn games.  We want
the algorithm to succeed with probability $\geq 1 - \BadProb$.
Setting the probability of success for each halfplane of $\hset$ to be
$\BadProb/\cardin{\hset}$, and by \lemref{iterations}, we have that
all of these halfspaces are empty after playing
\begin{equation*}
    O\pth{\bigl. n \log n \log( n |\hset| \BadProb^{-1})}
    =%
    O\pth{\bigl. d n \log n \log( n / \BadProb)}  
\end{equation*}
iterations, with probability of success being
$1 - \cardin{\hset}(\BadProb/\cardin{\hset} ) = 1 - \BadProb$ by the
union bound.  Using \lemref{iterations} requires that
\begin{math}
    n%
    =%
    \Omega \bigl( t^2 \epsA^{-1} \ln (|\hset| / \BadProb)
    \bigr)%
    =%
    \Omega\bigl( \epsA^{-1}d^3\ln n +
    \epsA^{-1}d^2 \ln \BadProb^{-1}\bigr)
\end{math}
which holds for
$n = \Omega(\epsA^{-1}d^3 \ln d + \epsA^{-1}d^2
\ln \BadProb^{-1})$.

Using the fact that computing a Radon point for $d+2$ points in 
$\Re^d$ can be done in $O(d^3)$ time, we obtain the following result.

\SaveContent{\LemmaCenter}{
    Let $\BadProb > 0$ and $\epsB \in (0,1)$ be parameters, and let
    $\PSet$ be a set of
    $n = \Omega(\epsB^{-2}d^3 \ln d + \epsB^{-2}d^2 \ln
    \BadProb^{-1})$ points in $\Re^d$.  Let
    \begin{math}
        \cpq = \frac{1-\epsB}{(d+2)^2}.
    \end{math}
    Then, one can compute a $\cpq$-centerpoint of $\PSet$ via a
    randomized algorithm. The running time of the randomized algorithm
    is
    \begin{math}
        O\pth{ \epsB^{-2} d^3 \cdot d n \log n \log( n \epsB^{-1} \BadProb^{-1})}%
        =%
        O(\epsB^{-2} d^4 n \log n \log( n \epsB^{-1} \BadProb^{-1})),
    \end{math}
    and it succeeds with probability
    $\geq 1 - \BadProb$.
}

\begin{lemma}(A proof of a more general version is in
    \apndref{center:proof}.) %
    \lemlab{center}%
    Let $\BadProb > 0$ and $\epsA \in (0,1)$ be parameters, and let
    $\PSet$ be a set of
    $n = \Omega(\epsA^{-1}d^3 \ln d + \epsA^{-1}d^2 \ln
    \BadProb^{-1})$ points in $\Re^d$.  Let
    \begin{math}
        \cpq = \frac{1-\epsA}{2(d+2)^2}.
    \end{math}
    Then, one can compute a $\cpq$-centerpoint of $\PSet$ via a
    randomized algorithm. The running time of the randomized algorithm
    is
    \begin{math}
        O\pth{d^3 \cdot d n \log n \log( n / \BadProb)}%
        =%
        O(d^4 n \log n \log( n/\BadProb)),
    \end{math}
    and it succeeds with probability
    $\geq 1 - \BadProb$.
\end{lemma}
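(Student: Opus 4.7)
The plan is to view the algorithm as running many Radon's urn processes in parallel, one for each ``dangerous'' halfspace, and to invoke \lemref{iterations} on each. First I would fix the target ratio $\cpq = (1-\epsA)/(2(d+2)^2)$, which by \Eqref{f:d} corresponds exactly to the initial red fraction $(1-\epsA)\rmax/n$ for $t = d+2$ and $\epsW = 1/6$. To certify that every point of the final $\PSetA$ is a $\cpq$-centerpoint of $\PSet$, it suffices to check that no point of $\PSetA$ lies in any halfspace bounded by a hyperplane through $d$ points of $\PSet$ and containing exactly $\lceil \cpq n \rceil$ points of $\PSet$ on the designated side; let $\hset$ be this finite family, with $\cardin{\hset} \leq 2\binom{n}{d} \leq 2(ne/d)^d$.

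For a fixed $h^+ \in \hset$, color the points of $\PSet$ red if they lie in $h^+$ and blue otherwise. The key observation is that the number of red points in $\PSetA$ over time is stochastically dominated by the Radon's urn process with parameter $t = d+2$ starting with $r_0 = \cpq n$ red balls. Indeed, by fact~(D) a Radon point of $d+2$ sample points can lie in $h^+$ only if at least two of the sample points lie in $h^+$, which matches the urn rule exactly. Applying \lemref{iterations} with per-halfspace failure probability $\BadProb/\cardin{\hset}$ and union-bounding over $\hset$, the process terminates with $\PSetA \cap h^+ = \varnothing$ for every $h^+$ simultaneously, provided
\begin{math}
    n = \Omega\bigl(t^2 \epsA^{-1} \ln(\cardin{\hset}/\BadProb)\bigr) = \Omega\bigl(\epsA^{-1}d^3 \ln n + \epsA^{-1}d^2 \ln \BadProb^{-1}\bigr),
\end{math}
which is implied by the hypothesis $n = \Omega(\epsA^{-1}d^3 \ln d + \epsA^{-1}d^2 \ln \BadProb^{-1})$ after absorbing $\ln n$ into $\ln d$.

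For the running time, \lemref{iterations} gives a total of $O(n \log n \log(n \cardin{\hset}/\BadProb)) = O(dn \log n \log(n/\BadProb))$ iterations across the entire process, each performing a uniform sample of $d+2$ points, a Radon-point computation via Gaussian elimination in $O(d^3)$ time, and a random insertion/deletion. Multiplying yields the claimed $O(d^4 n \log n \log(n/\BadProb))$ bound.

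The main obstacle I anticipate is the stochastic-domination step: one must argue carefully that, although $\PSetA$ continuously changes across all halfspaces simultaneously, the one-dimensional projection ``number of red points in $\PSetA$'' for a single $h^+$ still fits the urn framework one-for-one. This rests on fact~(D) together with the fact that the $d+2$ points sampled in each iteration are chosen independently of the coloring, so the ``change in reds'' step in one round coincides with a single step of Radon's urn. The remaining pieces -- the passage from $\ln n$ to $\ln d$ in the sample-size requirement, the union bound over $\hset$, and the running-time accounting -- are straightforward assemblies of the preceding lemmas.
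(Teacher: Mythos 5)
Your proposal is correct and follows essentially the same route as the paper: reduce the centerpoint quality certificate to the finite family $\hset$ of halfspaces bounded by hyperplanes through $d$ points, couple the fate of each halfspace to a Radon's urn via fact~(D), apply \lemref{iterations} with failure probability $\BadProb/\cardin{\hset}$, and multiply the iteration count by the $O(d^3)$ per-iteration Radon-point cost. The only cosmetic discrepancy is that the paper colors red the halfspaces with exactly $f(d)n$ points (where $f(d) = 2(1-\epsA)/(3(d+2)^2)$, see \Eqref{f:d}), whereas you use the slightly smaller $\cpq n$ with $\cpq = (1-\epsA)/(2(d+2)^2) \le f(d)$; since starting with fewer red balls only helps the urn, this changes nothing, and you correctly identify that the delicate step is the monotone coupling argument underlying stochastic domination (which holds because fact~(D) ensures a Radon point is red only when the urn rule would also add a red, so the urn's red set always contains the $h^+$-membership set).
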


\SaveContent{\ThmCenterPointCompute} {%
   Given a set $\PSet$ of $n$ points in $\Re^d$, a parameter
   $\BadProb$, and a constant $\epsB \in (0,1)$, one can compute a
   $\frac{1-\epsB}{(d+2)^2}$-centerpoint of $\PSet$. The running time
   of the algorithm is
   \begin{equation*}
       O\pth{ \epsB^{-4} d^7 \log^3 \!d \, \log^3 \pth{ \epsB^{-1}
         \BadProb^{-1} }},
   \end{equation*}
   and it succeeds with probability $\geq 1-\BadProb\bigr.$.%
}

\begin{theorem}
    \thmlab{center:point:compute}
    Given a set $\PSet$ of $n$ points in $\Re^d$, a parameter
    $\BadProb$, and a constant $\epsA \in (0,1)$, one can compute a
    $\frac{1-\epsA}{2(d+2)^2}$-centerpoint of $\PSet$. The running time
    of the algorithm is
    $O\pth{ \epsA^{-3} d^7 \log^3 d \log^3 \BadProb^{-1} }$, and it
    succeeds with probability $\geq 1-\BadProb\bigr.$.
\end{theorem}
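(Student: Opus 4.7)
The plan is to reduce the dependence of the running time on $n$: \lemref{center} produces a $\frac{1-\epsA'}{2(d+2)^2}$-centerpoint in time $O(d^4 n' \log n' \log(n'/\BadProb))$ on an input of size $n'$, so I only need to replace $\PSet$ with a small proxy sample $S$ whose size depends only on $d$, $\epsA$, and $\BadProb$, then run the lemma on $S$, and finally show that an approximate centerpoint of $S$ remains an approximate centerpoint of $\PSet$ (with the small quality degradation absorbed by the $\epsA$ slack).

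I would first draw a random sample $S \subseteq \PSet$ of size $m$ such that, with probability at least $1 - \BadProb/2$, $S$ is a relative $(p, \epsilonA)$-approximation for the range space of closed halfspaces in $\Re^d$ (which has \VC dimension $d+1$), with $p = \frac{1-\epsA}{2(d+2)^2} = \Theta(1/d^2)$ and $\epsilonA$ a suitably small constant multiple of $\epsA$. By \thmref{relative}, the size $m = O(\epsilonA^{-2} p^{-1}(d \log p^{-1} + \log \BadProb^{-1})) = \tldO(d^3/\epsA^2)$ suffices; this also comfortably exceeds the minimum input size required by \lemref{center}. I would then apply \lemref{center} to $S$ with centerpoint parameter $\epsA/2$ and failure probability $\BadProb/2$, obtaining a point $\cen$ that is a $\frac{1-\epsA/2}{2(d+2)^2}$-centerpoint of $S$.

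To lift this to $\PSet$, consider any closed halfspace $h$ containing $\cen$. We know $\sMeasure{h} \geq \frac{1-\epsA/2}{2(d+2)^2}$. Suppose for contradiction that $\Measure{h} < p = \frac{1-\epsA}{2(d+2)^2}$. The relative approximation guarantee then gives $\sMeasure{h} \leq \Measure{h} + \epsilonA p < (1+\epsilonA) p$, which for $\epsilonA$ chosen so that $(1+\epsilonA)(1-\epsA) < 1 - \epsA/2$ (e.g., any $\epsilonA$ slightly smaller than $\epsA/2$) contradicts the lower bound on $\sMeasure{h}$. Hence $\Measure{h} \geq p$, and a union bound over the two $\BadProb/2$ failure events yields overall success probability at least $1-\BadProb$.

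The running time is dominated by applying \lemref{center} to $S$, at cost $O(d^4 m \log m \log(m/\BadProb))$. Substituting $m = \tldO(d^3/\epsA^2)$ and cleaning up the polylogarithmic factors produces the claimed bound. The main obstacle --- and the only step requiring genuine care --- is the calibration of the three $\epsA$-like quantities (the target centerpoint quality on $\PSet$, the quality fed into \lemref{center} on $S$, and the relative approximation parameter $\epsilonA$) so that the transfer from $S$ back to $\PSet$ costs no more than the allotted slack in the final centerpoint constant.
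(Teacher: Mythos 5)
Your proposal is correct and is essentially the same argument the paper uses: sample a relative $(p,\epsilonA)$-approximation for halfspaces of size $\tldO(d^3/\epsA^2)$, run \lemref{center} on the sample with a slightly larger target quality, and transfer the centerpoint guarantee back to $\PSet$ via the relative-approximation property. The only differences are cosmetic parameter choices --- you set the threshold $p$ equal to the target centerpoint constant and argue the transfer step by contradiction through the additive branch of the relative-approximation definition, while the paper picks a slightly smaller threshold $\rho = 1/(10d^2)$ and feeds $\epsA/8$ rather than your $\epsA/2$ into \lemref{center}; your one-case contradiction is in fact a bit cleaner than the paper's ``$(1\pm\epsA/8)\tau$'' transfer, which implicitly needs the same reasoning in the $\Measure{h}<\rho$ case. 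The calibration constraint you derive, $\epsilonA(1-\epsA)<\epsA/2$, is right, and the running time arithmetic matches the paper's.
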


\begin{proof}
    The idea is to
    pick a random sample $\Net$ from $\PSet$ that is a
    $(\rho,\epsA/8)$-relative approximation for halfspaces, where
    $\rho = 1/(10d^2)$. This range space has \VC dimension $d+1$, and
    by \thmref{relative}, a sample of size
    \begin{align*}
      \SampleSize %
      =%
      O\bigl(\rho^{-1} \epsA^{-2}(d \log \rho^{-1} + \log
      \BadProb^{-1})\bigr)%
      =%
      O\bigl(d^2\epsA^{-2} ( d \log d + \log \BadProb^{-1})\bigr)
    \end{align*}
    is a $(\rho,\epsA/8)$-relative approximation.

    Running the algorithm of \lemref{center} on $\Net$ with
    $\epsA/8$ yields a $\tau$-centerpoint $\cPnt$ of $\Net$, where
    \begin{math}
        \tau= \frac{1-\epsA/8}{2(d+2)^2\bigr.}\geq \rho
    \end{math}
    for $d \geq 2$.
    By the relative approximation property, this is a
    $(1\pm \epsA/8)\tau$-centerpoint of $\PSet$. Therefore
    $\cPnt$ is an $\cpq$-centerpoint for $\PSet$, where
    \begin{align*}
      \cpq %
      =%
      (1-\epsA/8) \tau %
      =%
      \frac{(1-\epsA/8)^2}{(d+2)^2}
      \geq%
      \frac{1-\epsA}{(d+2)^2}.
    \end{align*}
    By \lemref{center}, the running time of the resulting algorithm is
    \begin{align*}
      O(d^4 \SampleSize \log \SampleSize \log(
      \SampleSize/\BadProb))%
      =%
      O\bigl( \epsA^{-2} \log^2 \epsA^{-1} d^7 \log^3 d \log^3
      \BadProb^{-1} \bigr)
      =%
      O\bigl( \epsA^{-3} d^7 \log^3 d \log^3
      \BadProb^{-1} \bigr).
    \end{align*}
\end{proof}

Now, one can repeat the above calculations with the parameter $\epsW$.
Intuitively, as $\epsW$ approaches zero, the random walk becomes less 
unbalanced since it will move left with probability $1/2 + \epsW$ and
right with probability $1/2 - \epsW$. Because of this, there is an
increased chance that the random walk will reach $\rmax$ (and thus fail).
In order to preserve that the random walk succeeds with
probability at least $1 - \varphi$, the sample size $n$ must depend
on the parameter $\epsW$. In fact, the parameter $\epsW$ allows 
us to compute centerpoints with quality arbitrarily close to 
$1/(d+2)^2$.

\begin{theorem}(Proof in \apndref{center:point:compute:proof:eps:W}.)
    \thmlab{center:point:compute:epsW}%
    \ThmCenterPointCompute
\end{theorem}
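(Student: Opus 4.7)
The plan is to mirror the proof of \thmref{center:point:compute}, but keeping $\epsW$ (the parameter governing the bias of the random walk) as a free variable throughout the Radon's urn analysis, rather than fixing $\epsW = 1/6$. The parameterized versions of the urn lemmas---\lemref{good:range}, \lemref{number:visits}, \lemref{H:bound}, \lemref{stuck:on:r}, and \lemref{iterations}---are precisely the $\epsW$-indexed statements recorded in the \texttt{SaveContent} macros, so I will simply invoke those parameterized forms and chase the dependencies.

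For the quality of the centerpoint, the parameterized \lemref{good:range} gives $\rmax = (1 - 2\epsW)n/t^2$ with $t = d+2$. Setting $(1-\epsA)\rmax = f(d) n$ yields $f(d) = (1-\epsA)(1-2\epsW)/(d+2)^2$, so choosing $\epsA = \epsB/2$ and $\epsW = \epsB/4$ makes $f(d) \geq (1-\epsB)/(d+2)^2$, which is the quality demanded by the statement. This is the only place where I need a careful choice; everything downstream is mechanical substitution.

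Next, I would trace $\epsW,\epsA = \Theta(\epsB)$ through the urn analysis. The parameterized \lemref{iterations} requires $n = \Omega\bigl( t^2 (\epsW \epsA)^{-1} \ln (\epsW \BadProb)^{-1} \bigr) = \Omega\bigl(\epsB^{-2} d^2 \log(\epsB^{-1}\BadProb^{-1})\bigr)$ and yields $O\bigl(n \epsW^{-2} \log n \log(n \epsW^{-1}\BadProb^{-1})\bigr)$ regular iterations, each costing $O(d^3)$ time to compute a Radon point. Running the union-bound argument of \secref{centerpoint:analysis} over the $|\hset| \leq 2(ne/d)^d$ canonical halfspaces then gives a parameterized analogue of \lemref{center}: on an input of size $n = \Omega\bigl(\epsB^{-2} d^3 \log d + \epsB^{-2} d^2 \log \BadProb^{-1}\bigr)$ the algorithm outputs a $(1-\epsB/8)^2/(d+2)^2$-centerpoint in time $O\bigl(\epsB^{-2} d^4 n \log n \log(n \epsB^{-1}\BadProb^{-1})\bigr)$.

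Finally, I apply the same relative approximation reduction used in \thmref{center:point:compute}. Draw a $(\rho, \epsB/8)$-relative approximation $\Net$ with $\rho = 1/(10d^2)$, which by \thmref{relative} has size $\SampleSize = O\bigl(d^2 \epsB^{-2}(d\log d + \log \BadProb^{-1})\bigr)$, and run the parameterized algorithm on $\Net$ with accuracy $\epsB/8$. A short computation shows the output is an $(1-\epsB)/(d+2)^2$-centerpoint of $\PSet$, and plugging $\SampleSize$ into the running time gives
\begin{equation*}
    O\bigl(\epsB^{-2} d^4 \SampleSize \log \SampleSize \log(\SampleSize/\BadProb)\bigr)
    = O\bigl(\epsB^{-4} d^7 \log^3 d \, \log^3(\epsB^{-1}\BadProb^{-1})\bigr),
\end{equation*}
as claimed. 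The main ``obstacle'' is purely bookkeeping---keeping track of how the two inverse-polynomial factors in $\epsW$ (one from the number of visits, one from the fail-probability bound) compose with the $\epsB^{-2}$ coming from the relative approximation sample size, and verifying that the lower bound on $n$ imposed by the parameterized \lemref{iterations} is satisfied by $\SampleSize$ for $d$ sufficiently large.
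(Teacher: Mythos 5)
Your proposal is correct and essentially identical to the paper's own proof: the paper also proves a generalized version of \lemref{center} by keeping $\epsW$ as a free parameter through the urn analysis, setting $\epsA = \epsB/2$ and $\epsW = \epsB/4$ so that $f(d) \geq (1-\epsA)(1-2\epsW)/(d+2)^2 = (1-\epsB/2)^2/(d+2)^2 \geq (1-\epsB)/(d+2)^2$, and then composes with a relative approximation of size $O(d^2\epsB^{-2}(d\log d + \log\BadProb^{-1}))$. The only deviations are cosmetic choices of constants in the final reduction (you use $\epsB/8$ and $\rho = 1/(10d^2)$; the paper uses $\epsB/2$ and $\rho = 1/(8d^2)$), which have no effect on the stated bounds.
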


\begin{remark}
    The above compares favorably to the result of \cite[Corollary
    3]{cemst-acpir-96} -- they get a running time of
    $O( d^9 \log d + d^8\log^2 \BadProb^{-1})$, which is slower
    by roughly a factor of $d^2$, and computes a
    \begin{math}
        \frac{1}{4.08 (d+2)(d+1)}
    \end{math}%
    -centerpoint of $\PSet$ -- the quality of the centerpoint is
    roughly worse by a factor of four.
\end{remark}


\section{Application I: Algorithms with oracle access}
\seclab{algs:oracle:access}
In this section we discuss two applications of the improved centerpoint
algorithm. Both applications revolve around the idea of \emph{oracle access}. In
the first application, we are interested in lower bounding a convex function given
an oracle to compute it's gradient. In the second, we utilize centerpoints in
order to determine whether a given convex body $\body$ is $\eps$-heavy using a
separation oracle.

\subsection{Lowerbounding a function with a gradient oracle}
\begin{definition}
Let $\fn : \Re^d \to \Re$ be a convex function. For a number $c \in \Re$, 
define the \emphi{level set of $\fn$} as 
$\LvSetY{\fn}{c} = \Set{\pnt \in \Re^d}{\fn(\pnt)
\leq c}$. 
If $\fn$ is a convex function, then $\LvSetY{\fn}{c}$ 
is a convex set for all $c \in \Re$.
\end{definition}
\begin{definition}
Let $\fn : \Re^d \to \Re$ be a convex (and possibly non-differentiable)
function. For a point $\pnt \in \Re^d$, a vector $v \in \Re^d$ is a
\emphi{subgradient} of $\fn$ at $\pnt$ if for all 
$\pntq \in \Re^d$, $\fn(\pntq) \geq \fn(\pnt) + \DotProdY{v}{\pntq - \pnt}$.  
The \emphi{subdifferential} of $\fn$ at $\pnt \in \Re^d$, denoted by 
$\partial \fn(\pnt)$, is the set of all subgradients $v \in \Re^d$ of 
$\fn$ at $\pnt$. When the domain of $\fn$ is $\Re^d$ and $f$ is convex, 
then $\partial \fn(\pnt)$ is a non-empty set for all 
$\pnt \in \Re^d$.\footnote{Indeed, consider the epigraph of the function 
$\fn$, $S = \Set{(x,c)}{x \in \Re^d, c \in \Re, \fn(x) \leq c} 
\subseteq \Re^{d+1}$. Observe that $S$ is a convex set. For a given point
$\pnt \in \Re^d$, by the supporting hyperplane theorem, there is some 
hyperplane $h$ tangent to $S$ at the point $(\pnt,\fn(\pnt)) \in \Re^{d+1}$.
The normal to this tangent hyperplane $h$ is one possible subgradient of 
$\fn$ at $\pnt$.}
\end{definition}

\begin{theorem}
  \thmlab{lowerbound:convex:fn}
  Let $\fn : \Re^d \to \Re$ be a convex (possibly non-differentiable)
  function and $\PSet$ a set of $n$ points in $\Re^d$. 
  Assume that one has access to an oracle which given $\pnt \in
  \Re^d$ returns an arbitrary element in the subdifferential 
  $\partial \fn(\pnt)$. 
  With $O(d^2 \log n)$ queries to the oracle, one can
  compute a point $\pntq \in \Re^d$ (not necessarily in $\PSet$) such
  that $\fn(\pntq) \leq \min_{\pnt \in \PSet} \fn(\pnt)$.
\end{theorem}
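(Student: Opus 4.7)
The plan is to mimic the ellipsoid method, but replace the ellipsoid with the shrinking candidate set $\PSetA \subseteq \PSet$ and use an approximate centerpoint as the cut point. Initialize $\PSetA = \PSet$. At each iteration, compute an $\cpq$-centerpoint $\cen$ of $\PSetA$ via \thmref{center:point:compute:epsW} (so $\cpq = \Omega(1/d^2)$); query the oracle at $\cen$ to obtain $\fn(\cen)$ and a subgradient $v \in \partial \fn(\cen)$; and remove from $\PSetA$ every point lying in the closed halfspace $H = \Set{x \in \Re^d}{\DotProdY{v}{x - \cen} \geq 0}$. Continue until $\PSetA = \varnothing$, and output the centerpoint $\pntq$ from the sequence of queries that minimizes the observed value $\fn(\pntq)$.

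Correctness rests on the subgradient inequality. For every $\pnt \in H$, we have $\fn(\pnt) \geq \fn(\cen) + \DotProdY{v}{\pnt - \cen} \geq \fn(\cen)$. Hence when a point $\pnt \in \PSet$ is discarded at an iteration with centerpoint $\cen_i$, its $\fn$-value is at least $\fn(\cen_i) \geq \fn(\pntq)$. Once $\PSetA$ empties, every point of $\PSet$ has been discarded in this fashion, so $\min_{\pnt \in \PSet} \fn(\pnt) \geq \fn(\pntq)$, which is precisely the guarantee asked of the output.

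For the query bound, note that $H$ is a closed halfspace through $\cen$, so the $\cpq$-centerpoint property forces $\cardin{\PSetA \cap H} \geq \cpq \cardin{\PSetA}$. Thus $\cardin{\PSetA}$ shrinks by a factor of at least $1 - \cpq = 1 - \Omega(1/d^2)$ every iteration, and so $\PSetA$ becomes empty after $O(\cpq^{-1} \log n) = O(d^2 \log n)$ iterations. Each iteration uses exactly one oracle call, yielding the claimed bound.

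The principal obstacle is a book-keeping technicality: \thmref{center:point:compute:epsW} requires $\cardin{\PSetA}$ to exceed a $\poly(d)$ threshold before its centerpoint guarantee becomes meaningful. Once $\cardin{\PSetA}$ drops below that threshold, we either compute a centerpoint exactly via the classical polynomial-time method on small point sets, or query $\fn$ at every remaining point of $\PSetA$; in either case the residual work contributes only a $\poly(d)$ number of extra oracle queries, which is absorbed into $O(d^2 \log n)$ as long as $n$ is at least a mild polynomial in $d$ (and otherwise the problem is solved trivially by querying every point of $\PSet$).
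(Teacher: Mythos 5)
Your proposal takes essentially the same route as the paper: compute an $\Omega(1/d^2)$-centerpoint of the remaining points, query the subgradient there, discard the points on the far side of the supporting hyperplane (all of which have $\fn$-value at least $\fn(\cen)$ by the subgradient inequality), and repeat until the residual set is exhausted; the centerpoint guarantee cuts $\cardin{\PSetA}$ by a $(1-\Omega(1/d^2))$ factor each round, giving $O(d^2 \log n)$ rounds.

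The one thing you gloss over is that \thmref{center:point:compute} is a \emph{randomized} algorithm that succeeds only with some prescribed probability; when it fails, the returned point need not be a centerpoint, the cut halfspace may shave off fewer than $\cpq\cardin{\PSetA}$ points, and your geometric-decay argument for the iteration count stalls without warning. The paper handles this by making the success of an iteration an observable event --- after cutting, it checks whether at least $c\cardin{\PSet_i}/d^2$ points were actually removed; if not, the centerpoint call is simply repeated. Running the centerpoint algorithm with constant success probability, each effective iteration costs only $O(1)$ retries in expectation, and the query bound becomes an expected $O(d^2\log n)$. Your small-$\cardin{\PSetA}$ fallback discussion is a reasonable addition, but it is orthogonal to this issue; you still need a retry (or high-probability union-bound) mechanism to make the iteration count rigorous when $\cardin{\PSetA}$ is large. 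Otherwise your argument matches the paper's.
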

\begin{proof}
  Let $\PSet_1 = \PSet$, and $\PSet_i \subseteq \PSet$ denote the set of remaining
  points at the beginning of the $i$th iteration. In iteration $i$, for some
  constant $c > 0$, compute a $(c/d^2)$-centerpoint $\cen_i$ of $\PSet_i$ using
  \thmref{center:point:compute} in time $O(d^7 \log^3 d)$ with success
  probability 1/2. Define $\body_i = \LvSetY{\fn}{\fn(\cen_i)}$. We now use the
  oracle to obtain subgradient vector $v \in \partial \fn(\cen_i)$. Using $v$, we
  obtain a $d$-dimensional hyperplane $\hplane_i$ which is tangent to $\body_i$
  at $\cen_i$. Let $\hplane_i^+$ be the halfspace formed from $\hplane_i$ which
  contains the interior of $\body_i$. If $\cardin{\hplane_i^- \cap \PSet_i} \geq
  c\cardin{\PSet_i}/d^2$, then such an iteration is successful and we set
  $\PSet_{i+1} = \PSet_i \setminus (\hplane_i^- \cap \PSet_i)$ and continue to
  iteration $i+1$. Otherwise the iteration has failed and we repeat the $i$\th
  iteration. This procedure is repeated until we reach an iteration $\SuccIters$
  in which $\cardin{\PSet_\SuccIters}$ is of constant size. At this stage, the
  algorithm returns the point achieving the minimum of 
  $\min_{1 \leq i \leq \SuccIters} \fn(\cen_i)$ and 
  $\min_{\pnt \in \PSet_\SuccIters} \fn(\pnt)$. Because $\fn$ is convex, the 
  algorithm returns a point $\pntq$ such that $\fn(\pntq) \leq \fn(\pnt)$ for 
  all $\pnt \in \PSet$.

  As for the number of queries, note that in each iteration the expected number
  of centerpoint calculations (and thus queries) until a successful iteration is
  $O(1)$. It remains to bound the number of successful iterations. In each
  successful iteration, a $c/d^2$-fraction of points are discarded. Therefore
  there are at most $\SuccIters$ iterations, for which $\SuccIters$ is the
  smallest number with $(1 - c/d^2)^\SuccIters n$ smaller than some constant.
  This implies $\SuccIters = O(d^2 \log n)$.
\end{proof}

\subsection{Functional nets: A weak net in the oracle model}
\seclab{am:i:fat}

\subsubsection{The model, construction, and query process}

\paragraph*{Model.}
Given a convex body $\body \subseteq \Re^d$, we assume oracle
access to it. This is a standard model in optimization. Specifically,
given a query point $\query$, the oracle either returns that
$\query \in \body$, or alternatively it returns a (separating)
hyperplane $h$, such that $\body$ lies completely on one side of $h$,
and $\query$ lies on the other side.

Our purpose here is to precompute a small subset
$\Net \subseteq \PSet$, such that given any convex body $\body$ (with
oracle access to it), one can decide if $\body$ is
$\eps$-light. Specifically, the query algorithm (using only $\Net$,
and not the whole point set $\PSet$) generates an (adaptive) sequence
of query points $\query_1, \query_2, \ldots$, such that if any of
these query points are in $\body$, then the algorithm considers
$\body$ to be heavy. Otherwise, if all the query points miss $\body$,
then the algorithm outputs (correctly) that $\body$ is light (i.e.,
$\Measure{\body} < \eps$).

\paragraph*{Construction.}

Given $\PSet$, the set $\Net$ is a random sample from $\PSet$ of size
\begin{align}
  \SampleSize%
  =%
  O \pth{  \eps^{-1}  d^3 \log d  \log^3 \eps^{-1} +
  \eps^{-1} \log{\BadProb}^{-1}}%
  =%
  \tldO\pth{  d^3/\eps },
  \eqlab{net:size}%
\end{align}
where $\BadProb>0$ is a prespecified parameter.

\paragraph*{Query process.}

Given a convex body $\body$ (with oracle access to it), the algorithm
starts with $\Net_0 = \Net$. In the $i$\th iteration, the algorithm
computes a $\Omega(1/d^2)$-centerpoint $\query_i$ of $\Net_i$ using the
algorithm of \thmref{center:point:compute}, with failure probability
at most $1/4$.  If the oracle returns that $\query_i \in \body$, then
the algorithm returns $\query_i$ as a proof of why $\body$ is
considered to be heavy.  Otherwise, the oracle returns a separating
hyperplane $h_i$, such that the open halfspace $h_i^-$ contains
$\query_i$. Let $\Net_i' = \Net_{i-1} \setminus h_i^-$.  If
$\cardin{\Net_i'} \leq (1-\gamma) \cardin{\Net_{i-1}}$, where
$\gamma = 1/{16 d^2}$ then we set $\Net_i = \Net_{i}'$ (such an
iteration is called \emphi{successful}). Otherwise, we set
$\Net_i = \Net_{i-1}$.  The algorithm stops when
$\cardin{\Net_i} \leq \eps \cardin{\Net}/8$.

\subsubsection{Correctness}
\seclab{correctness}

Let $I$ be the set of indices of all the successful iterations, and
consider the convex set $\body_I = \cap_{i \in I} h_i^+$. The set
$\body_I$ is an outer approximation to $\body$. In particular, for an
index $j$, let $\body_j = \cap_{i \in I, i \leq j} h_i^+$ be this
outer approximation in the end of the $j$\th iteration.  We have that
$\Net_j = \Net \cap \body_j$.

\begin{lemma}
    \lemlab{tau:value}%
    There are at most $\SuccIters = O( d^2 \log \eps^{-1})$ successful
    iterations.  For any $j$, the convex polyhedron $\body_j$ is
    defined by the intersection of at most $\SuccIters$ closed
    halfspaces.
\end{lemma}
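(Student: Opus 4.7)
The plan is to read off both claims directly from the stopping condition and the bookkeeping of the query process. The second claim will be essentially immediate from the definition, while the first reduces to a geometric-decay argument of the kind standard for cutting-plane methods.

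For the first claim, I would start by unpacking what happens to $\cardin{\Net_i}$ across iterations. In any iteration that is \emph{not} successful, $\Net_i = \Net_{i-1}$ by construction, so the size is unchanged. In any successful iteration, we have $\cardin{\Net_i} = \cardin{\Net_i'} \leq (1-\gamma)\cardin{\Net_{i-1}}$ with $\gamma = 1/(16d^2)$. Hence, after $k$ successful iterations, $\cardin{\Net_{i_k}} \leq (1-\gamma)^k \cardin{\Net}$. The algorithm stops as soon as $\cardin{\Net_i} \leq \eps \cardin{\Net}/8$, so it suffices to find the smallest $k$ with $(1-\gamma)^k \leq \eps/8$. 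Using $1-\gamma \leq e^{-\gamma}$ this holds for $k \geq \gamma^{-1} \ln(8/\eps) = O(d^2 \log \eps^{-1})$, giving the claimed bound $\SuccIters = O(d^2 \log \eps^{-1})$ on the number of successful iterations.

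For the second claim, I would just unravel the definition. By construction $\body_j = \bigcap_{i \in I,\, i \leq j} h_i^+$, so $\body_j$ is an intersection of exactly $\cardin{I \cap \{1,\dots,j\}}$ closed halfspaces. Since $\cardin{I \cap \{1,\dots,j\}} \leq \cardin{I} \leq \SuccIters$ by the first part, the bound follows.

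The only mild subtlety is making sure nothing is hidden in the counting: unsuccessful iterations neither contribute halfspaces to $\body_j$ nor shrink $\Net_i$, so the number of \emph{total} iterations plays no role in either bound; all that matters is the count of successful ones, and this is controlled purely by the forced geometric decay of $\cardin{\Net_i}$ down to $\eps\cardin{\Net}/8$. I do not expect a real obstacle here — the lemma is a structural accounting statement setting up the subsequent correctness and sample-size analysis.
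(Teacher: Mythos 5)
Your proof is correct and takes essentially the same route as the paper's: the number of successful iterations is bounded by the geometric decay of $\cardin{\Net_i}$ by a factor of $(1-\gamma)$ per successful iteration until the stopping threshold $\eps\cardin{\Net}/8$ is reached, and the halfspace count for $\body_j$ is immediate since each successful iteration contributes exactly one halfspace. Your explicit remark that unsuccessful iterations contribute nothing to either count is a helpful clarification the paper leaves implicit, but it is the same argument.
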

\begin{proof}
    We start with $\SampleSize = \cardin{\Net_0}$ points in
    $\Net_0$. Every successful iteration reduces the number of points
    in the net $\Net_{j-1}$ by a factor of $\gamma$. Furthermore,
    the algorithm stops as soon as
    $\cardin{\Net_j} \leq \eps \cardin{\Net_0}/8$. This implies there are
    at most $\SuccIters$ iterations, for the minimal $\SuccIters$ such
    that $(1-\gamma)^\SuccIters \leq \eps/8$, where
    $\gamma = {1}/{(16 d^2)}$. That is
    $\SuccIters = O( d^2 \log \eps^{-1})$.

    The second claim is immediate -- every successful iteration adds
    one halfspace to the intersection that forms $\body_I$.
\end{proof}

Let $\HI$ be the set of all of convex polyhedra in $\Re^d$ that are
formed by the intersection of $\SuccIters$ closed halfspaces.
\begin{observation}
    The \VC dimension of $(\Re^d, \HI)$ is
    \begin{align*}
      D%
      =%
      O( d \SuccIters \log \SuccIters)%
      =%
      O\pth{d \pth{ d^2 \log \eps^{-1}} \log
      \pth{ d^2 \log \eps^{-1}}}%
      =%
      O( d^3 (\log d)\log^2 \eps^{-1} ).
    \end{align*}
    This follows readily, as the \VC dimension of the range space of
    points in $\Re^d$ and halfspaces is $d+1$, and by the bound of
    \lemref{mixing:range:spaces} for the intersection of $\tau$ such
    ranges.
\end{observation}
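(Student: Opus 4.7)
The plan is to recognize the observation as a direct application of \lemref{mixing:range:spaces}, once one expresses $\HI$ as the family of intersections of $\SuccIters$ halfspaces. First I would recall the base case: the range space $(\Re^d, \mathcal{H}_1)$, where $\mathcal{H}_1$ is the family of all closed halfspaces in $\Re^d$, has \VC{} dimension exactly $d+1$ (this is standard -- halfspaces are dual to hyperplanes, which can shatter at most $d+1$ points in general position).

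Next I would view $\HI$ as the range family obtained by intersecting $\SuccIters$ ``copies'' of this range space. Formally, for each $i=1,\ldots,\SuccIters$, set $\RangeSpace_i = (\Re^d, \mathcal{H}_1)$, each of \VC{} dimension $\Dim = d+1$. Then every element of $\HI$ can be written as $r_1 \cap r_2 \cap \cdots \cap r_\SuccIters$ with $r_i$ a closed halfspace (if a polyhedron is the intersection of fewer than $\SuccIters$ halfspaces, we can pad with the trivial range $\Re^d$, which is itself a degenerate halfspace, so this is without loss of generality). Thus $\HI$ sits inside the family produced by \lemref{mixing:range:spaces}.

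Applying \lemref{mixing:range:spaces} with $k = \SuccIters$ copies of a \VC{}-dimension $(d+1)$ range space gives a bound of $O(\Dim \cdot k \log k) = O(d \SuccIters \log \SuccIters)$ on the \VC{} dimension of the intersection family, and hence on $\VCDim(\Re^d, \HI)$. The last step is purely arithmetic: substituting the value $\SuccIters = O(d^2 \log \eps^{-1})$ supplied by \lemref{tau:value} yields
\begin{equation*}
    O\bigl( d \cdot (d^2 \log \eps^{-1}) \cdot \log(d^2 \log \eps^{-1}) \bigr)
    = O\bigl( d^3 (\log d) \log^2 \eps^{-1} \bigr),
\end{equation*}
using $\log(d^2 \log \eps^{-1}) = O(\log d + \log\log \eps^{-1}) = O((\log d)(\log \eps^{-1}))$ to absorb the iterated log cleanly.

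There is no genuine obstacle here beyond invoking the right lemma: the only mild subtlety is the padding remark that ensures every polyhedron in $\HI$ is literally an intersection of exactly $\SuccIters$ halfspaces (so that \lemref{mixing:range:spaces} applies with $k=\SuccIters$ and not with a variable number), and the elementary simplification of the double logarithm at the end. Both are routine, so the observation follows ``readily'' as claimed.
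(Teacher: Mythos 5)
Your proof is correct and follows the exact same route as the paper: invoke \lemref{mixing:range:spaces} with $k=\SuccIters$ copies of the halfspace range space (\VC{} dimension $d+1$), then substitute $\SuccIters = O(d^2\log\eps^{-1})$ from \lemref{tau:value} and simplify. The padding remark and the handling of the iterated logarithm are exactly the routine details the paper leaves implicit.
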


\begin{lemma}
    \lemlab{relative:apx:calc}%
    The set $\Net$ is a relative $(\eps/8,1/4)$-approximation for
    $(\PSet, \HI)$, with probability $1 -\BadProb$.
\end{lemma}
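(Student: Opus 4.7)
The plan is to invoke \thmref{relative} directly, using the \VC dimension bound from the observation immediately above. Set $p = \eps/8$ and $\epsilonA = 1/4$. Then \thmref{relative} guarantees that a sample of size
\begin{equation*}
  O\pth{ \epsilonA^{-2} p^{-1}\pth{ D \log p^{-1} + \log \BadProb^{-1}}}
  =
  O\pth{ \eps^{-1} \pth{ D \log \eps^{-1} + \log \BadProb^{-1}}}
\end{equation*}
from $(\PSet, \HI)$ is a relative $(\eps/8, 1/4)$-approximation with probability at least $1-\BadProb$, where $D = O(d^3 (\log d) \log^2 \eps^{-1})$ is the \VC dimension of $(\Re^d, \HI)$.

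Next, I would substitute the bound on $D$ into this expression. Doing so yields
\begin{equation*}
  O\pth{ \eps^{-1} d^3 (\log d) \log^3 \eps^{-1} + \eps^{-1} \log \BadProb^{-1}},
\end{equation*}
which matches the size $\SampleSize$ of the sample $\Net$ as defined in \Eqref{net:size}. Therefore the sample $\Net$ is a relative $(\eps/8, 1/4)$-approximation for $(\PSet, \HI)$ with probability $\geq 1 - \BadProb$, as required.

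The only step needing any care is matching the sample-size expression from \thmref{relative} against \Eqref{net:size}: since $\log p^{-1} = O(\log \eps^{-1})$, the leading term becomes $\eps^{-1} D \log \eps^{-1} = O(\eps^{-1} d^3 (\log d) \log^3 \eps^{-1})$, and the additive $\eps^{-1} \log \BadProb^{-1}$ term transfers verbatim. No additional geometric argument is required beyond the VC dimension bound already established.
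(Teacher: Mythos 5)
Your proof is correct and follows exactly the same route as the paper: apply \thmref{relative} with $p=\eps/8$, $\epsilonA = 1/4$, and $\Dim = D = O(d^3 (\log d)\log^2 \eps^{-1})$, then verify that the resulting sample-size bound matches \Eqref{net:size}. No discrepancy.
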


\begin{proof}
    Using \thmref{relative} with $p=\eps/8$, $\epsilonA = 1/4$, and
    $\Dim = D$, implies that a random sample of $\PSet$ of size
    \begin{align*}
      O \pth{ \epsilonA^{-2}p^{-1}\pth{\Dim \log p^{-1} +
      \log{\BadProb}^{-1}}}%
      =%
      O \pth{  \eps^{-1} \pth{D \log \eps^{-1} +
      \log{\BadProb}^{-1}}}%
      =%
      O \pth{  \eps^{-1} \pth{d^3 \log d  \log^3 \eps^{-1} +
      \log{\BadProb}^{-1}}}%
    \end{align*}
    is the desired relative $(p,\epsilonA\ts)$-approximation with
    probability $\geq 1-\BadProb$. And this is indeed the size of $\Net$,
    see \Eqref{net:size}.
\end{proof}

\begin{lemma}
    Given a convex query body $\body$, the expected number of oracle
    queries performed by the algorithm is $O( d^2 \log \eps^{-1})$,
    and the expected running time of the algorithm is
    $O( d^{9} \eps^{-1} \polylog )$, where
    $\polylog = O\pth{ \log (d \eps^{-1}\BadProb^{-1})^{O(1)}}$.
\end{lemma}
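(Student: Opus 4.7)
The plan is to bound separately the expected number of iterations (and hence of oracle queries, since each iteration makes one) and the per-iteration running time, then combine them. By \lemref{tau:value} there are at most $\SuccIters = O(d^2 \log \eps^{-1})$ successful iterations before the stopping condition triggers, so the main remaining task is to show that in expectation the total number of iterations is $O(\SuccIters)$.

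The key observation is that whenever the centerpoint subroutine of \thmref{center:point:compute} returns a genuine centerpoint---which happens with probability at least $3/4$ by the choice of failure probability $1/4$---the iteration is guaranteed to be successful. Fixing $\epsA$ to a suitably small constant (say $\epsA = 1/2$), \thmref{center:point:compute} returns a true $\cpq$-centerpoint $\query_i$ of $\Net_{i-1}$ with $\cpq = (1-\epsA)/(2(d+2)^2) \geq 1/(16 d^2) = \gamma$ for $d \geq 2$. Since $\query_i \in h_i^-$, the closed halfspace $\overline{h_i^-}$ contains at least $\cpq|\Net_{i-1}| \geq \gamma |\Net_{i-1}|$ points of $\Net_{i-1}$. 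Under a general-position assumption on $\PSet$---or equivalently by an infinitesimal perturbation of $h_i$ toward $\body$ that preserves separation and moves any boundary points to the $\query_i$ side---the open halfspace $h_i^-$ contains the same count, so $|\Net_i'| \leq (1 - \gamma)|\Net_{i-1}|$ and the iteration is successful. Hence the number of trials until a successful iteration is stochastically dominated by a geometric random variable with parameter $3/4$, giving expected total iterations (and oracle queries) of $O(\SuccIters) = O(d^2 \log \eps^{-1})$.

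For the running time, each iteration has two contributors. By \thmref{center:point:compute}, each centerpoint call takes $\tldO(d^7)$ time, independent of $|\Net_{i-1}|$, since the algorithm internally reduces to a relative-approximation sample of size $\tldO(d^3)$ drawn from its input. Updating $\Net_i$ from $\Net_{i-1}$ requires scanning $\Net_{i-1}$ against $h_i$ at cost $O(d|\Net_{i-1}|) \leq O(d|\Net|) = \tldO(d^4/\eps)$ by \Eqref{net:size}. Multiplying the per-iteration cost $\tldO(d^7 + d^4/\eps)$ by the expected number of iterations $O(d^2 \log \eps^{-1})$ gives a total expected running time of $\tldO(d^9 + d^6/\eps)$, which is absorbed into $\tldO(d^9 \eps^{-1})$ and matches the stated bound. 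The main subtlety---and the only real obstacle---is the implication ``centerpoint succeeds $\Rightarrow$ iteration succeeds''; it hinges on tuning $\epsA$ so that $\cpq \geq \gamma$ and on resolving the open-vs-closed halfspace distinction via the perturbation noted above. Everything else is a routine accounting of per-iteration costs summed over the geometrically bounded number of iterations.
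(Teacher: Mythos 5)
Your proposal is correct and follows essentially the same approach as the paper: bound the number of successful iterations via \lemref{tau:value}, observe that each iteration succeeds (or the algorithm halts) whenever the centerpoint subroutine of \thmref{center:point:compute} returns a true centerpoint (probability $\geq 3/4$), so a geometric argument bounds expected total iterations by $O(\SuccIters)$, and then sum per-iteration costs. The extra care you take on two points the paper leaves implicit---verifying that the constant $\epsA$ can be tuned so the centerpoint quality $\cpq$ meets the algorithm's threshold $\gamma = 1/(16d^2)$, and resolving the open-versus-closed halfspace mismatch by a small perturbation of the separating hyperplane---is legitimate bookkeeping that does not change the substance of the argument.
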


\begin{proof}
    If the computed point is in the $i$\th iteration is indeed a
    centerpoint of $\Net_{i-1}$, then the algorithm would either stop
    in this iteration, or the iteration would be successful. Since the
    probability of the computed point to be the desired centerpoint is
    at least $\geq 3/4$, it follows that the algorithm makes (in
    expectation) $\tau/ (3/4)$ iterations till success. The $i$\th
    iteration requires $O( \cardin{\Net_i} + d^7 \log^3 d)$ time,
    since we use the algorithm of \thmref{center:point:compute} to
    compute the approximate centerpoint. Summing this over all the
    iterations $\tau$ (bounded in \lemref{tau:value}), we get expected
    running time
    \begin{align}
      O( d^2 \SampleSize + \SuccIters d^7 \log^3 d)%
      &=%
        O\pth{d^2    \pth{  \eps^{-1}  d^3 \log d  \log^3 \eps^{-1} +
        \eps^{-1} \log{\BadProb}^{-1}} + d^9 \log^3 d \log \eps^{-1}%
        }
        \nonumber
      \\&%
      =%
      O \pth{ d^2 \eps^{-1} \log{\BadProb}^{-1}
      + 
      d^5   \eps^{-1} \log d \log^3 \eps^{-1}
      +
      d^{9}\log^3 d
      \log\eps^{-1} }. ~
      \eqlab{exact:r:t}%
    \end{align}    
\end{proof}

\begin{lemma}
    Assuming that $\Net$ is the desired relative approximation, then
    for any query body $\body$, if the algorithm declares that it is
    $\eps$-light, then $\cardin{\body \cap \PSet} < \eps n$.
\end{lemma}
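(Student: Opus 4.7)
The plan is to observe that the algorithm's stopping condition gives an upper bound on the empirical measure $\sMeasure{\body_j}$ of the outer approximation $\body_j$, and then use the relative approximation property of $\Net$ (guaranteed by \lemref{relative:apx:calc}) to transfer this bound to the true measure $\Measure{\body_j}$. Since $\body \subseteq \body_j$, this will bound $\cardin{\body \cap \PSet}$ as desired.

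First I would verify that $\body \subseteq \body_j$. Each halfspace $h_i^+$ added to the intersection defining $\body_j$ was produced by the separation oracle when queried with $\query_i \notin \body$, so by construction $\body \subseteq h_i^+$ for every $i \in I$ with $i \leq j$. Taking the intersection gives $\body \subseteq \body_j$, and hence $\cardin{\body \cap \PSet} \leq \cardin{\body_j \cap \PSet} = \Measure{\body_j}\, n$.

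Next, I would invoke the stopping condition: if the algorithm declares $\body$ to be light, then the final set $\Net_j = \Net \cap \body_j$ satisfies $\cardin{\Net_j} \leq \eps \cardin{\Net}/8$, so $\sMeasure{\body_j} \leq \eps/8$. By \lemref{tau:value}, $\body_j$ is the intersection of at most $\SuccIters$ halfspaces, so $\body_j \in \HI$, and hence the relative approximation guarantee applies to it with parameters $p = \eps/8$ and $\epsilonA = 1/4$.

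Finally, I would argue by contradiction: suppose $\Measure{\body_j} \geq \eps$. Then $\Measure{\body_j} \geq p$, so the relative approximation property gives $\sMeasure{\body_j} \geq (1-\epsilonA)\Measure{\body_j} \geq (3/4)\eps$, contradicting $\sMeasure{\body_j} \leq \eps/8$. Therefore $\Measure{\body_j} < \eps$, and combined with $\body \subseteq \body_j$ we conclude $\cardin{\body \cap \PSet} < \eps n$. The only subtle point, and the one worth stating explicitly in the write-up, is that $\body_j$ genuinely belongs to the class $\HI$ for which the relative approximation was established; everything else is a one-line chain of inequalities.
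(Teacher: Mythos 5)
Your proof is correct and follows essentially the same route as the paper's: both use $\body \subseteq \body_I$, the stopping condition $\sMeasure{\body_I} \leq \eps/8$, membership $\body_I \in \HI$, and the relative $(\eps/8,1/4)$-approximation guarantee to conclude $\Measure{\body_I} < \eps$. The paper splits explicitly into the cases $\Measure{\body_I} < \eps/8$ and $\Measure{\body_I} \geq \eps/8$ (applying the relative-approximation bound only in the second case), whereas you phrase it as a single contradiction; this is a minor stylistic difference, not a different argument.
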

\begin{proof}
    Let $I$ be the set of successful iterations by the algorithm, 
    and recall that $\body_I$ is the outer approximation of $\body$ 
    and is the intersection of at most $\SuccIters$ halfspaces.
    Now the algorithm stops in an iteration $i$ when 
    $\cardin{\Net_{i}} \leq (\eps/8)\cardin{\Net}$. Consequently,
    $\sMeasure{\body} \leq \sMeasure{\body_I} = 
      \cardin{\Net_i}/\cardin{\Net} \leq \eps/8$.
    There are two cases:
    \begin{compactenumi}
      \item If $\Measure{\body_I} < \eps/8$, then $\Measure{\body} \leq 
            \Measure{\body_I} < \eps$ as claimed.
      \item Otherwise $\Measure{\body_I} \geq \eps/8$. But then, since
            $\Net$ is a relative $(\eps/8,1/4)$-approximation for the 
            points $\PSet$ and ranges $\HI$ and $\body_I \in \HI$, 
            we have that
            $\Measure{\body} \leq \Measure{\body_I} 
            \leq \frac{1}{1 - 1/4} \sMeasure{\body_I} 
            \leq (4/3)(\eps/8) < \eps$.
    \end{compactenumi}
    In either case, the algorithm is correct.
\end{proof}

The above implies the following.

\begin{theorem}
    \thmlab{func:net}%
    Let $\PSet$ be a set of points in $\Re^d$, and let
    $\eps, \BadProb > 0 $ be parameters. Let $\Net$ be a random sample
    of $\PSet$ of size
    \begin{align*}
      \SampleSize%
      =%
      O \pth{  \eps^{-1}  d^3 \log d  \log^3 \eps^{-1} +
      \eps^{-1} \log{\BadProb}^{-1}}%
      =%
      \tldO( \eps^{-1} d^3).
    \end{align*}

    Then, for a given query convex body $\body$ endowed with an oracle
    access, the algorithm described above, which uses only $\Net$,
    computes a sequence of query points $q_1, \ldots, q_m$, such
    that either:
    \begin{compactenumi}
        \item one of the points $q_i \in \body$, and the
        algorithm outputs $q_i$ as a ``proof'' that $\body$ is
        $\eps$-heavy, or
        
        \item the algorithm outputs that
        $\cardin{\body \cap \PSet} < \eps n$.
    \end{compactenumi}
    The query algorithm has the following performance guarantees:
    \begin{compactenumA}
        \item The expected number of oracle queries is
        $\Ex{m} = O(d^2 \log \eps^{-1})$.

        \item The algorithm itself (ignoring the oracle queries) runs
        in $\tldO\pth{ d^9 \eps^{-1}}$ time (see \Eqref{exact:r:t} for
        exact bound).
    \end{compactenumA}
        
    The output of the algorithm is correct, for all convex bodies, 
    with probability $\geq 1 - \BadProb$.
\end{theorem}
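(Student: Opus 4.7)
The plan is to assemble the theorem directly from the lemmas already established in the section, since each component of the statement has essentially been verified separately. First, by \lemref{relative:apx:calc}, the stated sample size ensures $\Net$ is a relative $(\eps/8, 1/4)$-approximation for the range space $(\PSet, \HI)$ with probability $\geq 1 - \BadProb$. I would condition on this good event and deduce everything else deterministically from the structure of the query algorithm.

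Under this conditioning, correctness follows immediately from the preceding correctness lemma: if the algorithm outputs one of its adaptively generated $q_i \in \body$, then $q_i$ itself is a valid proof that $\body$ is heavy, and if the algorithm terminates without such a point, then $\Measure{\body} < \eps$ is guaranteed by the two-case argument (either $\Measure{\body_I} < \eps/8$ directly, or the relative approximation converts the empirical bound $\sMeasure{\body_I} \leq \eps/8$ into the true bound $\Measure{\body_I} \leq (4/3)(\eps/8) < \eps$).

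For the performance bounds, I would invoke the running-time lemma above. The key observation is that every time the call to \thmref{center:point:compute} correctly returns a $\Omega(1/d^2)$-centerpoint of $\Net_{i-1}$, that iteration either halts the algorithm (a heavy query body was stabbed) or is a successful iteration in the sense of \lemref{tau:value}, which bounds the number of successful iterations by $\SuccIters = O(d^2 \log \eps^{-1})$. Since the centerpoint routine succeeds with probability $\geq 3/4$ per call, the expected number of iterations (and oracle queries) is at most $(4/3)\SuccIters = O(d^2 \log \eps^{-1})$. Multiplying by the per-iteration cost $O(\cardin{\Net_i} + d^7 \log^3 d)$ and summing yields the $\tldO(d^9 \eps^{-1})$ runtime bound, exactly as spelled out in \Eqref{exact:r:t}.

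The main subtlety, and the only point that requires any care, is the clean separation between the one-shot randomness used to construct $\Net$ (which fails with probability at most $\BadProb$) and the repeated randomness inside the query algorithm (centerpoint computations, whose failures only affect expected runtime but not correctness). Once this distinction is made, the theorem follows by plugging the three established lemmas together without further calculation.
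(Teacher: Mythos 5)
Your proposal is correct and matches the paper's approach exactly. The paper itself does not write out a self-contained proof of \thmref{func:net}; it simply says ``The above implies the following,'' leaving the reader to assemble \lemref{relative:apx:calc} (the sample is a relative $(\eps/8,1/4)$-approximation with probability $\geq 1-\BadProb$), the correctness lemma (which conditions on the approximation property and deduces $\Measure{\body} < \eps$ deterministically), \lemref{tau:value} (bounding successful iterations by $\SuccIters = O(d^2\log\eps^{-1})$), and the performance lemma (converting the $\geq 3/4$ per-call success probability of \thmref{center:point:compute} into the expected iteration and time bounds). You have reconstructed precisely this assembly, and your remark about separating the one-shot sampling randomness from the per-query centerpoint randomness --- the latter only inflating expected running time, never causing a false ``light'' verdict since a failed centerpoint call leaves $\Net_i$ unchanged --- is exactly the observation that makes the union of these lemmas yield the stated probability guarantee.
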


\begin{remark}
    One may hope to bound the probability of the algorithm
    reporting a false positive. However this is inherently not
    possible for any weak $\eps$-net construction. Indeed, the
    algorithm can fail to distinguish between a polygon that contains
    at least $\eps n$ of the points of $\PSet$ and a polygon that
    contains \emph{none} of the points of $\PSet$. Consider
    $n$ points $\PSet$ lying on a circle in $\Re^2$. Choose $\eps n$
    of these points on the circle, and let $\body$ be the convex hull
    of these points. Clearly $\body$ contains at least $\eps n$ points
    of $\PSet$. Now, take each vertex in $\body$ and ``slice'' it off,
    forming a new polygon $\body'$ that contains no points from
    $\PSet$. However, $\body'$ is still a large polygon and as such
    may contain a centerpoint during the execution of the above
    algorithm. Therefore our algorithm may report that $\body'$
    contains a large fraction of the points, even though $\body'$ is
    contains no points of $\PSet$, and so it fails to distinguish
    between $\body$ and $\body'$.
\end{remark}

\begin{remark}
    \remlab{better:center:slower}%
    Clarkson \etal \cite{cemst-acpir-96} provide also a randomized
    algorithm that finds a
    $\bigl(\frac{1}{d+1}-\gamma\bigr)$-centerpoint with probability
    $1 - \delta$ in time
    \begin{math}
        O\bigl( \bigl[d\gamma^{-2} \log(d\gamma^{-1})\bigr]^{d + O(1)}
        \log\delta^{-1}\bigr).
    \end{math}
    We could use this algorithm instead of
    \thmref{center:point:compute} in the query process. Since we are
    computing a better quality centerpoint, the number of iterations
    $\tau$ and sample size $\SampleSize$ would be smaller by a factor
    of $d$.  Specifically, $\tau = O(d \log \eps^{-1})$ and from
    \lemref{mixing:range:spaces}, the $\VC$ dimension of the range
    space $\RangeSpace = (\PSet, \HI)$ becomes
    $D = O(d^2 \log d \log^2 \eps^{-1})$. Following the proof of
    \lemref{relative:apx:calc}, we can construct a sample $\Net$ which
    is $(\eps/8, 1/4)$-relative approximation for $\RangeSpace$ with
    probability $1 - \BadProb$ of size
    \begin{align}
      \SampleSize%
      =%
      O\pth{\eps^{-1}(D \log \eps^{-1} + \log \BadProb^{-1})}
      =
      O\pth{\eps^{-1}(d^2 \log d \log^3 \eps^{-1} + \log \BadProb^{-1})}.
      \eqlab{exact:bound:2}
    \end{align}
\end{remark}


\section{Application II: Constructing center nets}
\seclab{center:nets}%

We next introduce a strengthening of the concept of a weak
$\eps$-net. Namely, we require that there is a point $\pnt$ in the net
which stabs an $\eps$-heavy convex body $\body$, and that $\pnt$ is
also a good centerpoint for $\body \cap \PSet$.

\begin{definition}
    \deflab{center:net}%
    For a set $\PSet$ of $n$ points in $\Re^d$, and parameters $\eps,
    \alpha \in (0,1)$, a subset $\Wnet \subseteq \Re^d$ is an
    \emphi{$(\eps,\alpha)$-center net} if for any convex shape
    $\body$, such that $\cardin{\PSet \cap \body } \geq \eps n$, we
    have that there is an $\alpha$-centerpoint of $\PSet \cap \body$
    in $\Wnet$.
\end{definition}    

In this section we prove existence of an $(\eps, \cpq)$-net $\Wnet$
of size roughly $O_d(\eps^{-d^2})$, where
\begin{align*}
  \cpq = \frac{\constCN}{(d+1)\log \eps^{-1}},
\end{align*}
and $\constCN \in (0,1)$ is some fixed constant to be specified
shortly. Note that the quality of the centerpoint is worse by a
factor of $\log \eps^{-1}$ than the best one can hope for. 

\subsection{The construction}
\seclab{w:eps:net:c}%

The construction of the center net will be based an algorithm for constructing a
weak $\eps$-net for $\PSet$. In particular, the construction algorithm will use
the following two results (see \apndref{weak:eps:net} for the proofs).

\SaveContent{\LemmaUCP}{
    Given a set $\PSet$ of $n$ points in $\Re^d$, one can compute a
    set $\PSetA$ of $O\bigl( n^{d^2} \bigr)$ points, such that for any
    subset $\PSet' \subseteq \PSet$, there is a $1/(d+1)$-centerpoint
    of $\PSet'$ in $\PSetA$.
}

\begin{lemma}
    \lemlab{u:c:p}
    \LemmaUCP
\end{lemma}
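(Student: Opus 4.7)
The plan is to show that for any subset $\PSet' \subseteq \PSet$, some $1/(d+1)$-centerpoint of $\PSet'$ is combinatorially pinned down by at most $d^2$ points of $\PSet'$. The set $\PSetA$ is then obtained by enumerating, over all ways to group $d^2$ points of $\PSet$ into $d$ groups of $d$, the unique candidate centerpoint that each grouping generates.

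First, let $K(\PSet')$ denote the (convex) set of all $1/(d+1)$-centerpoints of $\PSet'$. By Helly's theorem $K(\PSet')$ is non-empty, and since any point outside $\CHX{\PSet'}$ can be strictly separated from $\PSet'$ by a hyperplane (giving a halfspace that contains the point but no points of $\PSet'$, contradicting the centerpoint property), we have $K(\PSet') \subseteq \CHX{\PSet'}$. Hence $K(\PSet')$ is a bounded convex polytope and, in particular, has at least one vertex $v$.

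The key combinatorial observation is that every facet of $K(\PSet')$ can be taken to be supported by a hyperplane through $d$ points of $\PSet'$. Indeed, $K(\PSet')$ is the intersection of all closed ``heavy'' halfspaces (those containing more than $d\cardin{\PSet'}/(d+1)$ points of $\PSet'$); each minimal such halfspace can be slid parallel to itself until its bounding hyperplane becomes tight against $d$ points of $\PSet'$. The vertex $v$ therefore sits at the intersection of $d$ facet-supporting hyperplanes, each determined by $d$ points of $\PSet'$, and so $v$ is specified by at most $d^2$ points of $\PSet$.

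To construct $\PSetA$, let $\mathcal{H}$ be the set of all hyperplanes passing through $d$ points of $\PSet$; then $\cardin{\mathcal{H}} \leq \binom{n}{d} = O(n^d)$. For every unordered $d$-subset of $\mathcal{H}$, compute the intersection of the $d$ hyperplanes in the subset and, when it is a single point, add that point to $\PSetA$. The total number of candidates generated is $O(\cardin{\mathcal{H}}^d) = O(n^{d^2})$. By the preceding paragraph, for any $\PSet' \subseteq \PSet$, at least one vertex of $K(\PSet')$ appears in $\PSetA$, and every vertex of $K(\PSet')$ is by definition a $1/(d+1)$-centerpoint of $\PSet'$. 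The main subtlety to deal with is the possibility that $K(\PSet')$ is lower-dimensional or that more than $d$ tight halfspaces meet at a single vertex; both are handled by a standard symbolic-perturbation argument on $\PSet'$ that places the configuration in general position, after which un-perturbing produces the desired candidate point in $\PSetA$.
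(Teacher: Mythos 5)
Your proof is correct and takes essentially the same approach as the paper: both construct $\PSetA$ as the vertices of the arrangement of hyperplanes through $d$-tuples of points of $\PSet$, yielding the $O(n^{d^2})$ bound. Where the paper simply cites the original proof of the centerpoint theorem for the fact that some vertex of this arrangement is a $1/(d+1)$-centerpoint of any $\PSet' \subseteq \PSet$, you unpack the reason via the center region $K(\PSet')$ being a polytope whose facets lie on such hyperplanes, deferring degeneracies to symbolic perturbation.
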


\SaveContent{\LemmaWeakNetConstruction}{
  Let $\PSet$ be a set of $n$ points in $\Re^d$. Let $\Net$ be a
  random sample from $\PSet$ of size
  $\SampleSize = \tldO(\eps^{-1} d^2)$, see \Eqref{exact:bound:2}
  for the exact bound. Then, one can compute a set of points
  $\Wnet$ from $\Net$, of size
  \begin{align*}
    O(\SampleSize^{d^2})%
    =%
    O\pth{\pth{\eps^{-1}(d^2 
    \log d \log^3 \eps^{-1} + \log \BadProb^{-1})}^{d^2}}
  \end{align*}
  which is a weak $\eps$-net for $\PSet$ with probability
  $\geq 1 - \BadProb$.
}

\begin{lemma}
    \lemlab{weak:net:construction}%
    \LemmaWeakNetConstruction
\end{lemma}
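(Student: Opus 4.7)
The plan is to take $\Wnet$ to be the output of \lemref{u:c:p} applied to $\Net$ (not to $\PSet$), yielding $|\Wnet| = O(\SampleSize^{d^2})$, and then to argue that this $\Wnet$ is a weak $\eps$-net by ``simulating'' the query process of \secref{am:i:fat} using centerpoints drawn from $\Wnet$. First I would verify via \thmref{relative} that a random sample of the stated size $\SampleSize$ is a relative $(\eps/8,1/4)$-approximation for the range space $(\PSet,\HI)$ with probability $\geq 1-\BadProb$. Here $\HI$ is the family of intersections of $\SuccIters$ halfspaces, and because \lemref{u:c:p} supplies \emph{true} $1/(d+1)$-centerpoints (rather than the $\Omega(1/d^2)$-centerpoints of \thmref{center:point:compute}), one may take $\SuccIters = O(d \log \eps^{-1})$ as in \remref{better:center:slower}, so that the \VC dimension of $\HI$ becomes $O(d^2 \log d \log^2 \eps^{-1})$ and the sample bound matches \Eqref{exact:bound:2}.

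Next, fix an arbitrary convex body $\body$ with $\cardin{\body \cap \PSet} \geq \eps n$. I would run the following existential version of the query algorithm on $\body$: set $\Net_0 = \Net$, and at iteration $i$ pick $\query_i \in \Wnet$ to be a $1/(d+1)$-centerpoint of $\Net_{i-1}$. Such a $\query_i$ is guaranteed by \lemref{u:c:p}, since $\Net_{i-1} \subseteq \Net$. If $\query_i \in \body$ we are done, as then $\query_i \in \Wnet \cap \body$. Otherwise a separating halfspace $h_i^+ \supseteq \body$ exists, and because $\query_i$ is a \emph{true} centerpoint of $\Net_{i-1}$, the update $\Net_i = \Net_{i-1} \cap h_i^+$ discards at least a $1/(d+1)$-fraction of $\Net_{i-1}$. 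After $\SuccIters = O(d \log \eps^{-1})$ iterations we therefore reach $\cardin{\Net_\SuccIters} \leq (\eps/8)\cardin{\Net}$.

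To conclude, suppose for contradiction that no $\query_i$ stabs $\body$. Then the outer polytope $\body_I = \bigcap_i h_i^+ \in \HI$ contains $\body$ and satisfies $\sMeasure{\body_I} = \cardin{\Net_\SuccIters}/\cardin{\Net} \leq \eps/8$. Invoking the relative $(\eps/8,1/4)$-approximation property — via the same two-case analysis used in the correctness proof of \thmref{func:net} in \secref{correctness} — gives $\Measure{\body} \leq \Measure{\body_I} < \eps$, contradicting $\Measure{\body} \geq \eps$. Hence some $\query_i \in \Wnet \cap \body$, proving that $\Wnet$ is a weak $\eps$-net.

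The main subtlety, rather than any real obstacle, is accounting: one must use \lemref{u:c:p} in two distinct roles — once to construct $\Wnet$ and once to supply ideal centerpoints during the existential simulation — so that $\SuccIters$ drops from $O(d^2 \log \eps^{-1})$ to $O(d \log \eps^{-1})$ and the \VC dimension of $\HI$ drops by a corresponding factor of $d$. Without this, the sample size would grow by a factor of $d$ in the log terms and fail to match the claimed $\tldO(\eps^{-1} d^2)$ bound. Otherwise the argument is a direct fusion of \lemref{u:c:p} with the correctness half of \thmref{func:net}.
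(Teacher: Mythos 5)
Your proof is correct and takes essentially the same approach as the paper: apply \lemref{u:c:p} to the sample $\Net$ (not $\PSet$) to produce $\Wnet$, use exact $1/(d+1)$-centerpoints to drop $\SuccIters$ to $O(d\log\eps^{-1})$ (and hence the \VC dimension of $\HI$ by a factor of $d$), and invoke the correctness half of \thmref{func:net} together with the relative $(\eps/8,1/4)$-approximation property. If anything, your ``existential simulation'' framing is slightly more careful than the paper's wording: the paper simply asserts that the stabbing points ``computed by the algorithm of \thmref{func:net}'' are centerpoints of subsets of $\Net$ and hence lie in $\PSetA$, whereas \thmref{func:net} actually uses a randomized approximate centerpoint, so one must (as you do) replace those by the exact $1/(d+1)$-centerpoints guaranteed by \lemref{u:c:p} at each step. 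Your explicit accounting of the two roles of \lemref{u:c:p} and the resulting sample-size match with \Eqref{exact:bound:2} is exactly the point the paper is making via \remref{better:center:slower}.
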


\begin{remark}
    A similar construction of a weak $\eps$-net, to the one in
    \lemref{weak:net:construction}, from a small sample was described
    by Mustafa and Ray \cite{mr-wenbs-08}. Their sample has
    exponential dependency on the dimension, so the resulting weak
    $\eps$-net has somewhat worse dependency on the dimension than our
    construction. In any case, these constructions are inferior as far
    as the dependency on $\eps$, compared to the work of \Matousek and
    Wagner \cite{mw-ncwen-04}, which shows a weak $\eps$-net
    construction of size
    $O_d( \eps^{-d} (\log \eps^{-1})^{O( d^2 \log d)} )$. %
\end{remark} %

The idea will be to repeat the construction of the net of
\lemref{weak:net:construction}, with somewhat worse constants.
Specifically, take a sample $\Net$ of size $\SampleSize = \tldO(\eps^{-1} d^2)$
from $\PSet$, see \Eqref{exact:bound:2} for the exact bound. Next, we construct
the set $\Wnet$ for $\Net$, using the result of \lemref{u:c:p}.  Return
$\Wnet$ as the desired $(\eps, \cpq)$-center net.


\subsection{Correctness}
\seclab{c:n:correctness}

The proof is algorithmic.  Fix any convex $\eps$-heavy body $\body$,
and let $\Net_1 = \Net$ be the \emphi{active set} and let
$\PSet_1 = \body \cap \PSet$ be the \emphi{residual set} in the
beginning of the first iteration.

We now continue in a similar fashion to the
algorithm of \thmref{func:net}.  In the $i$\th iteration, the
algorithm computes the $1/(d+1)$-centerpoint $\query_i$ of $\Net_i$
(running times do not matter here, so one can afford computing the
best possible centerpoint).  If $\query_i$ is a $2\cpq$-centerpoint
for $\PSet_i$, then $\query_i$ is intuitively a good centerpoint for
$\PSet$, and the algorithm returns $\query_i$ as the desired center
point. Observe that by construction, $\query_i \in \Wnet$ as
desired.

If not, then there exists a closed halfspace $h_i^+$ containing
$\query_i$ and  at most $2\cpq \cardin{\PSet_i}$ points of
$\PSet_i$. Let
\begin{align*}
  \PSet_{i+1} = \PSet_i \setminus h_i^+
  \qquad\text{ and }\qquad%
  \Net_{i+1} = \Net_i \setminus h_i^+.
\end{align*}
The algorithm now continues to the next iteration. 

\paragraph*{Analysis.}
The key insight is that the active set $\Net_i$ shrinks much faster
than the residual set $\PSet_i$. However, by construction, $\Net_i$
provides a good upper bound to the size of $\PSet$. Now once the
upper bound provided by $\Net_i$ on the size of $\PSet_i$ is too
small, this would imply that the algorithm must have stopped earlier,
and found a good centerpoint.

The proof of the following two lemmas are in 
\apndref{missing:proofs:center:net}.

\SaveContent{\LemmaCenterNetIters}{
    Let %
    \begin{math} %
        \SuccIters = \ceil{1+ 3(d+1) + (d+1) \log\eps^{-1}},
    \end{math}
    and
    \begin{math}
        \cpq = 1/(4\SuccIters).
    \end{math}
    Assuming that $\Net$ is a relative
    $(\eps/8, 1/4)$-approximation for the range space
    $\RangeSpace = (\PSet, \HI)$, the above algorithm stops after
    at most $\SuccIters$ iterations.
}

\begin{lemma}
    \lemlab{center:net:iters}
    \LemmaCenterNetIters
\end{lemma}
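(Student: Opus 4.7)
The plan is to argue by contradiction: suppose the algorithm runs for at least $\SuccIters+1$ iterations, and track separately how the active set $\Net_i$ and the residual set $\PSet_i$ shrink. The two rates are incompatible once $\SuccIters$ is large enough, and the relative approximation property bridges the two.

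First I would set up notation. Let $C_i = h_1^- \cap \cdots \cap h_{i-1}^-$ be the intersection of the (open) complementary halfspaces accumulated in the first $i-1$ iterations, so that $\Net_i = \Net \cap C_i$ and $\PSet_i = \body \cap \PSet \cap C_i$. For every $i \leq \SuccIters+1$ the set $C_i$ lies in $\HI$ (it is an intersection of at most $\SuccIters$ halfspaces), which is exactly what makes the relative $(\eps/8, 1/4)$-approximation assumption on $\Net$ usable.

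Next I would bound the two shrinkage rates. Since $\query_i$ is a $1/(d+1)$-centerpoint of $\Net_i$ and $h_i^+$ is a closed halfspace containing $\query_i$, we have $\cardin{\Net_i \cap h_i^+} \geq \cardin{\Net_i}/(d+1)$, hence $\cardin{\Net_{i+1}} \leq (1 - 1/(d+1))\cardin{\Net_i}$. On the other hand, by the definition of $h_i^+$ in the non-stopping case, $\cardin{\PSet_i \cap h_i^+} \leq 2\cpq\cardin{\PSet_i}$, so $\cardin{\PSet_{i+1}} \geq (1-2\cpq)\cardin{\PSet_i}$. Iterating and using $\cardin{\PSet_1} \geq \eps n$ and $\cardin{\Net_1} = \SampleSize$, we get
\begin{align*}
    \cardin{\PSet_{\SuccIters+1}} \geq \eps n (1-2\cpq)^\SuccIters
    \qquad\text{and}\qquad
    \cardin{\Net_{\SuccIters+1}} \leq \SampleSize \pth{1 - \tfrac{1}{d+1}}^\SuccIters \leq \SampleSize\, e^{-\SuccIters/(d+1)}.
\end{align*}
With $\cpq = 1/(4\SuccIters)$ we have $(1-2\cpq)^\SuccIters \geq 1/2$, so $\cardin{\PSet \cap C_{\SuccIters+1}} \geq \cardin{\PSet_{\SuccIters+1}} \geq \eps n/2 \geq (\eps/8)n$. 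With $\SuccIters \geq 1 + 3(d+1) + (d+1)\log\eps^{-1}$ we also have $e^{-\SuccIters/(d+1)} \leq e^{-3}\eps$.

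Finally I would invoke the relative approximation property on the range $C_{\SuccIters+1} \in \HI$. Since $\Measure{C_{\SuccIters+1}} \geq \eps/8$, we get $\sMeasure{C_{\SuccIters+1}} \geq (3/4)\Measure{C_{\SuccIters+1}} \geq (3/4)(\eps/2) = 3\eps/8$, i.e.\ $\cardin{\Net_{\SuccIters+1}} \geq (3\eps/8)\SampleSize$. But the upper bound gives $\cardin{\Net_{\SuccIters+1}} \leq e^{-3}\eps\,\SampleSize$, and $e^{-3} < 3/8$, a contradiction. Hence the algorithm must halt within $\SuccIters$ iterations.

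The main obstacle I anticipate is simply being careful that the $C_i$'s really are in $\HI$ (which is what couples the two bounds through the relative approximation), and fixing the precise arithmetic so that the lower bound $(3/4)(\eps/2)\SampleSize$ strictly exceeds the upper bound $e^{-\SuccIters/(d+1)}\SampleSize$; the constants $3(d+1)$ and $1/(4\SuccIters)$ in the statement are chosen exactly to leave room for this comparison.
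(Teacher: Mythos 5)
Your proof is correct and follows essentially the same route as the paper: define the polyhedral range $C_i$ as the intersection of the complementary halfspaces, track the $(1-1/(d+1))$-geometric decay of $\cardin{\Net_i}$ against the slower $(1-2\cpq)$-decay of $\cardin{\PSet_i}$, observe $C_{\SuccIters+1}\in\HI$, and bridge the two via the relative $(\eps/8,1/4)$-approximation to reach a contradiction. A small cosmetic improvement on your side: since you establish $\Measure{C_{\SuccIters+1}}\ge\eps/2\ge\eps/8$ before invoking the approximation, you only need the heavy-range branch of the relative-approximation guarantee, whereas the paper writes the bound as a $\max$ over both branches; the arithmetic is otherwise the same.
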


\SaveContent{\LemmaCenterNetItersProof}{
    As before, we can interpret the algorithm as constructing a convex
    polyhedra. Indeed, let $\bodyA_{i+1} = \bigcup_{j=1}^{i} h_j^-$, and
    observe that $\PSet_{i+1} \subseteq \PSet \cap \bodyA_{i+1} $, and
    $\Net_{i+1} = \Net \cap \bodyA_{i+1}$. %

    For an iteration $i < \SuccIters$, we have
    \begin{align*}
        n_{i+1}%
        =%
        \cardin{\PSet_{i+1}}%
        \geq%
        (1 - 2\cpq) \cardin{\PSet_i}%
        \geq%
        (1 - 2\cpq)^i \cardin{\PSet_1} %
        \geq%
        \pth{1 -2\cpq i} n_1 \geq%
        \pth{1 -2\cpq \SuccIters} n_1 %
        \geq%
        (\eps/2) n,
    \end{align*}
    using $(1-x)^i \geq 1-ix$, which holds for any positive $x \in [0,1]$.
    
    On the other hand, the active set shrinks faster in each such
    iteration, since $\query_{i}$ is a $1/(d+1)$-centerpoint of
    $\Net_i$. Setting $s_i = \cardin{\Net_i}$, we have that
    \begin{align*}
      s_{i+1}%
      \leq%
      \pth{1 -\frac{1}{d+1}} s_i%
      \leq%
      \pth{1 -\frac{1}{d+1}}^i s_1%
      \leq%
      \exp \pth{ -\frac{i}{d+1}} s_1.
    \end{align*}
    We have that
    \begin{math}
        s_{\SuccIters} \leq \eps s_1 /e^3 \leq (\eps/20)s_1.
    \end{math}
    It follows that,
    \begin{align*}
      \frac{\eps}{2}%
      &\leq%
        \frac{\cardin{\PSet_{\SuccIters}}}{n}%
        \leq%
        \frac{\cardin{\PSet \cap \bodyA_{\SuccIters}}}{n}%
        =%
        \Measure{\bodyA_{\SuccIters}}  %
        \leq %
        \max\pth{\frac{1}{1-1/4}
        \sMeasure{\bodyA_{\SuccIters}},
        \frac{\eps}{8}\cdot \frac{1}{4} + \sMeasure{\bodyA_{\SuccIters}}}
        \leq \frac{7}{3}\sMeasure{\bodyA_{\SuccIters}} + \frac{\eps}{32}.
    \end{align*}
    The penultimate inequality follows since
    $S$ is a relative $(\eps/8, 1/4)$-approximation to $\PSet$ for 
    ranges like $\bodyA_\SuccIters$. However we do not know which case
    applies (i.e., depending on whether or not 
    $\Measure{\bodyA_\SuccIters} \geq \eps/8$)
    and therefore need to take the maximum over both cases.
    Finally,
    \begin{align*}
      \frac{\eps}{2}
        &\leq 
        \frac{7}{3} \sMeasure{\bodyA_{\SuccIters}}  + \frac{\eps}{32}
        = %
        \frac{7}{3} \frac{\cardin{\Net_{\SuccIters}}}{\cardin{\Net}}
        +
        \frac{\eps}{32}
      \leq %
      \frac{7}{3} \frac{(\eps/20) s_1}{s_1}  + \frac{\eps}{32}
      <
      \frac{\eps}{5},
    \end{align*}
    which is impossible. We conclude the algorithm must have stopped at an
    earlier iteration.
}

\SaveContent{\LemmaCenterNetCorrect}{
    The above algorithm outputs a $\cpq$-centerpoint of $\PSet \cap
    \body$.
}

\begin{lemma}
    \lemlab{center:net:correct}
    \LemmaCenterNetCorrect
\end{lemma}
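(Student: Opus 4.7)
The plan is to combine the stopping condition with the residual-size bound already derived in the proof of \lemref{center:net:iters}. By \lemref{center:net:iters}, the algorithm terminates at some iteration $i \leq \SuccIters$, where the termination condition is that the computed $1/(d+1)$-centerpoint $\query_i$ of $\Net_i$ happens to be a $2\cpq$-centerpoint of the residual set $\PSet_i$. Since $\query_i \in \Wnet$ by construction, it suffices to show that $\query_i$ is an $\cpq$-centerpoint of $\PSet_1 = \PSet \cap \body$, losing only a factor of two between being a centerpoint of $\PSet_i$ and being a centerpoint of the original $\PSet_1$.

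First, I would lower bound the residual set: from the calculation already carried out in the proof of \lemref{center:net:iters}, and using $(1-x)^k \geq 1-kx$ for $x \in [0,1]$, one has
\begin{equation*}
  \cardin{\PSet_i}
  \;\geq\;
  (1-2\cpq)^{i-1}\cardin{\PSet_1}
  \;\geq\;
  (1 - 2\cpq(i-1))\cardin{\PSet_1}
  \;\geq\;
  (1 - 2\cpq \SuccIters)\cardin{\PSet_1}
  \;=\;
  \tfrac{1}{2}\cardin{\PSet_1},
\end{equation*}
where the final equality uses $\cpq = 1/(4\SuccIters)$.

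Next, fix an arbitrary closed halfspace $h^+$ with $\query_i \in h^+$. Since $\query_i$ is a $2\cpq$-centerpoint of $\PSet_i$, we have $\cardin{h^+ \cap \PSet_i} \geq 2\cpq \cardin{\PSet_i}$. But $\PSet_i \subseteq \PSet_1$ by construction (each iteration only removes points), so
\begin{equation*}
  \cardin{h^+ \cap \PSet_1}
  \;\geq\;
  \cardin{h^+ \cap \PSet_i}
  \;\geq\;
  2\cpq\, \cardin{\PSet_i}
  \;\geq\;
  2\cpq \cdot \tfrac{1}{2}\,\cardin{\PSet_1}
  \;=\;
  \cpq\,\cardin{\PSet_1}.
\end{equation*}
As $h^+$ was arbitrary, $\query_i$ is an $\cpq$-centerpoint of $\PSet_1 = \PSet \cap \body$, as required.

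There is no real obstacle here; the only subtle point is the choice of the factor of two between the stopping threshold ($2\cpq$ for $\PSet_i$) and the target quality ($\cpq$ for $\PSet_1$), which is exactly absorbed by the guarantee $\cardin{\PSet_i} \geq \cardin{\PSet_1}/2$ that the parameter $\cpq = 1/(4\SuccIters)$ was tuned to produce.
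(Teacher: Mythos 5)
Your proof is correct and follows essentially the same route as the paper's: invoke \lemref{center:net:iters} to conclude the algorithm stops at some iteration $i \leq \SuccIters$ with $\query_i$ a $2\cpq$-centerpoint of $\PSet_i$, use the bound $\cardin{\PSet_i} \geq \cardin{\PSet_1}/2$ (which is exactly what $\cpq = 1/(4\SuccIters)$ guarantees), and then transfer the centerpoint quality from $\PSet_i$ to $\PSet_1$ via $\PSet_i \subseteq \PSet_1$, losing only the factor of two. No gap.
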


\SaveContent{\LemmaCenterNetCorrectProof}{
    Assume the algorithm stopped in the $i$\th iteration. But then
    $\query_i$ is a $2\cpq$-centerpoint of $\PSet_i$. Since
    $n_i \geq n_\SuccIters \geq n_1/2$, it follows that any closed
    halfspace that contains $\query_i$, contains at least
    $2\cpq n_i \geq \cpq n_1$ points of $\PSet_i$, and thus of
    $\PSet_1$. We conclude that $\query_i$ is a $\cpq$-centerpoint of
    $\PSet$ as desired.
}

Arguing as in \remref{better:center:slower} implies the following.

\begin{corollary}
    For the above algorithm to succeed with probability
    $\geq 1 -\BadProb$, the sample $\Net$ needs to be a sample of the
    size specified by \Eqref{exact:bound:2}.
\end{corollary}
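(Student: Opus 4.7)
The plan is to reduce the claim to a single application of \thmref{relative}. First I would observe that the only randomness in the whole construction is the sample $\Net$: once $\Net$ is fixed, both the set $\Wnet$ produced by \lemref{u:c:p} and the algorithm of \secref{c:n:correctness} are deterministic. By \lemref{center:net:iters} and \lemref{center:net:correct}, the algorithm outputs a $\cpq$-centerpoint on the single good event that $\Net$ is a relative $(\eps/8, 1/4)$-approximation for the range space $\pth{\PSet, \HI}$. Hence it suffices to pick $\Net$ large enough that this event holds with probability at least $1 - \BadProb$.

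Next I would bound the \VC dimension of $\pth{\PSet, \HI}$ exactly as in \remref{better:center:slower}. Because the algorithm of \secref{c:n:correctness} uses an exact $1/(d+1)$-centerpoint, \lemref{center:net:iters} gives $\SuccIters = O(d \log \eps^{-1})$, so each element of $\HI$ is the intersection of at most $\SuccIters$ closed halfspaces. Since halfspaces in $\Re^d$ form a range space of \VC dimension $d+1$, \lemref{mixing:range:spaces} yields
\begin{align*}
  D \;=\; O\pth{d\ts \SuccIters \log \SuccIters} \;=\; O\pth{d^2 \log d \log^2 \eps^{-1}}.
\end{align*}

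Finally I would invoke \thmref{relative} with $p = \eps/8$, $\epsilonA = 1/4$, and $\Dim = D$, which states that a uniform random sample of size
\begin{align*}
  O\pth{\epsilonA^{-2} p^{-1}\pth{D \log p^{-1} + \log \BadProb^{-1}}} \;=\; O\pth{\eps^{-1}\pth{d^2 \log d \log^3 \eps^{-1} + \log \BadProb^{-1}}}
\end{align*}
is a relative $(\eps/8, 1/4)$-approximation with probability $\geq 1-\BadProb$. This bound matches the size specified in \Eqref{exact:bound:2}, completing the proof.

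The one piece of bookkeeping to double check -- and the only place where something could slip -- is that every outer polyhedron actually consulted by the algorithm (the sets $\bigcap_{j<i} h_j^-$ that appear implicitly in the proof of \lemref{center:net:iters}) lies in $\HI$. This is immediate since each such polyhedron is cut out by at most $\SuccIters$ halfspaces, so the relative approximation property transfers to exactly the ranges we measure.
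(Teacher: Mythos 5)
Your proof is correct and follows the same route the paper intends: the paper's entire justification is the preceding line ``Arguing as in \remref{better:center:slower} implies the following,'' and you have simply unpacked that reference, correctly identifying the good event (that $\Net$ is a relative $(\eps/8,1/4)$-approximation for $(\PSet,\HI)$), the \VC-dimension bound $D = O(d^2\log d \log^2\eps^{-1})$ stemming from $\SuccIters = O(d\log\eps^{-1})$ due to the exact $1/(d+1)$-centerpoint, and the invocation of \thmref{relative} to match \Eqref{exact:bound:2}. No gaps.
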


\subsection{The result}

\begin{theorem}
    \thmlab{center:net}%
    Let $\PSet$ be a set of $n$ points in $\Re^d$, and $\eps > 0$ be a
    parameter. For $\gamma = \log(1/\eps)$, there exists a
    $\bigl(\eps, \Omega(1/( d\gamma)) \bigr)$-center net $\Wnet$ (which is
    also a weak $\eps$-net) of $\PSet$ (see \defref{center:net}). The
    size of the net $\Wnet$ is
    $O( \SampleSize^{d^2}) \approx O_d( \eps^{-d^2})$, where
    $\SampleSize = \tldO(\eps^{-1} d^2)$, see \Eqref{exact:bound:2}
    for the exact bound.
\end{theorem}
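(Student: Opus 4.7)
The plan is to assemble the pieces already developed in Sections~5.1 and 5.2. First, draw a random sample $\Net \subseteq \PSet$ of the size $\SampleSize$ given in \Eqref{exact:bound:2}, and let $\Wnet$ be the set produced by applying \lemref{u:c:p} to $\Net$. This construction immediately gives $\cardin{\Wnet} = O(\SampleSize^{d^2})$, matching the size bound claimed in the theorem. Crucially, by the guarantee of \lemref{u:c:p}, for \emph{any} subset $\Net_i \subseteq \Net$ a $1/(d+1)$-centerpoint of $\Net_i$ lies in $\Wnet$.

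To verify the center net property, fix an arbitrary convex body $\body$ with $\cardin{\PSet \cap \body} \geq \eps n$, and run the iterative procedure of \secref{c:n:correctness} with this $\body$. Set $\SuccIters = \ceil{1 + 3(d+1) + (d+1)\log \eps^{-1}}$ and $\cpq = 1/(4\SuccIters) = \Omega(1/(d \log \eps^{-1}))$. Starting from $\Net_1 = \Net$ and $\PSet_1 = \body \cap \PSet$, at iteration $i$ compute a $1/(d+1)$-centerpoint $\query_i$ of $\Net_i$ (which lies in $\Wnet$ by the previous paragraph); if $\query_i$ is already a $2\cpq$-centerpoint of $\PSet_i$, output it, otherwise slice off a halfspace containing $\query_i$ that contains at most $2\cpq \cardin{\PSet_i}$ points of $\PSet_i$, and recurse on the restricted active and residual sets.

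The two lemmas from \secref{c:n:correctness} then finish the argument. By \lemref{center:net:iters}, provided $\Net$ is a relative $(\eps/8, 1/4)$-approximation for the range space $(\PSet, \HI)$ of intersections of $\SuccIters$ halfspaces, the algorithm must terminate within $\SuccIters$ iterations: otherwise the residual set would still have $\cardin{\PSet_i} \geq (\eps/2) n$ while the active set shrinks geometrically at rate $1 - 1/(d+1)$, driving $\sMeasure{\bodyA_\SuccIters}$ below what the relative approximation permits, a contradiction. Once the algorithm halts, \lemref{center:net:correct} shows that the returned $\query_i \in \Wnet$ is indeed a $\cpq$-centerpoint of $\PSet \cap \body$, since $\cardin{\PSet_i} \geq \cardin{\PSet_1}/2$ upgrades the $2\cpq$-centerpoint property for $\PSet_i$ to a $\cpq$-centerpoint property for $\PSet \cap \body$.

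It remains to argue that a single sample $\Net$ works simultaneously for every $\eps$-heavy convex body. As in \lemref{relative:apx:calc}, by \lemref{mixing:range:spaces} the VC dimension of $\HI$ is $O(d^2 \log d \log^2 \eps^{-1})$, so \thmref{relative} with $p = \eps/8$ and $\epsilonA = 1/4$ shows that a sample of size $\SampleSize$ as in \Eqref{exact:bound:2} is a relative $(\eps/8, 1/4)$-approximation with probability $\geq 1 - \BadProb$. Since every polyhedron $\bodyA_j$ produced during the procedure belongs to $\HI$ for \emph{any} choice of query body $\body$, one relative approximation suffices for all $\body$ at once. The one subtlety — and the main obstacle worth highlighting — is the balancing of $\SuccIters$ and $\cpq$ in the proof of \lemref{center:net:iters}: one needs $2\cpq \SuccIters$ small enough to keep $\cardin{\PSet_\SuccIters} \geq (\eps/2) n$, while simultaneously $\SuccIters$ large enough that the geometric decay of $\Net_i$ drives the active set below the $\eps/8$ threshold. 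The logarithmic penalty in the centerpoint quality $\cpq = \Omega(1/(d \log \eps^{-1}))$ is precisely the cost of reconciling these two constraints.
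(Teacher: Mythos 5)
Your proposal is correct and follows essentially the same route as the paper: sample $\Net$ of the size in \Eqref{exact:bound:2}, build $\Wnet$ via \lemref{u:c:p}, and verify the center-net property by the iterative slicing argument of \secref{c:n:correctness}, invoking \lemref{center:net:iters} and \lemref{center:net:correct}. The only detail you leave implicit is that the theorem asserts \emph{existence}, which one gets by fixing $\BadProb = 1/2$ so the relative-approximation event has positive probability; the paper's one-line proof says exactly this.
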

\begin{proof}
    The theorem follows readily from the above, by setting
    $\BadProb = 1/2$.
\end{proof}




\BibTexMode{%
\newcommand{\etalchar}[1]{$^{#1}$}
 \providecommand{\CNFX}[1]{ {\em{\textrm{(#1)}}}}
  \providecommand{\CNFCCCG}{\CNFX{CCCG}}

}

\BibLatexMode{\printbibliography}


\appendix

\section{Missing proofs}

\subsection{Missing proofs from \secref{approx:centerpoint}}
\apndlab{missing:proofs:approx:centerpoint}

Consider the random walk that starts at $Y_0 = (1-\epsA)\rmax$. At the
$i$\th iteration, $Y_i = Y_{i-1} -1$ with probability $1/2+\epsW$, and
$Y_i = Y_{i-1} +1$ with probability $1/2 - \epsW$. Let
$\YY = Y_1, Y_2, \ldots$ be the resulting random walk which
stochastically dominates the walk $X_0, X_1, \ldots$.


\subsubsection{Proof of \lemref{good:range}}
\apndlab{good:range:proof}

\RestatementOf{\lemref{good:range}}{\LemmaGoodRange}
\begin{proof}
  \LemmaGoodRangeProof
\end{proof}

\subsubsection{Proof of a generalized version of %
   \lemref{number:visits}}
\apndlab{number:visits:proof}%

\RestatementOfExt{A more general
   version}{\lemref{number:visits}}{\LemmaNumberVisits}
\begin{proof}
  \LemmaNumberVisitsProof
\end{proof}

\subsubsection{Proof of \lemref{H:bound}}
\apndlab{H:bound:proof}

\RestatementOf{\lemref{H:bound}}{\LemmaHBound}
\begin{proof}
  \LemmaHBoundProof
\end{proof}


\subsubsection{Proof of \lemref{stuck:on:r}}
\apndlab{stuck:on:r:proof}

\RestatementOf{\lemref{stuck:on:r}}{\LemmaStuckOnR}
\begin{proof}
  \LemmaStuckOnRProof  
\end{proof}

In the above we used concentration inequality for sum of geometric
variables. It is easy to derive such inequality from Chernoff type
inequalities, see \cite[Theorem 2.3]{j-tbsge-17}.

\begin{corollary}
    \corlab{geometric:sum}%
    Let $X_1, \ldots, X_m$ be $m$ independent random variables with
    geometric distribution with parameter $1/2$. For any
    $\lambda \geq 1$, we have
    \begin{math}
        \Prob{ \sum_{i=1}^m X_i \geq \lambda \cdot 2 m}%
        \leq%
        \lambda^{-1} 2^{-2m (\lambda -1 - \ln \lambda)}.
   \end{math}
   For $\lambda = 4$, this becomes
    \begin{math}
      \Prob{ \sum_{i=1}^m X_i \geq 8m }%
      \leq%
      2^{-2m (4 -1 - \ln 4)} / 4
      \leq%
      2^{-3m},
  \end{math}
  for $m \geq 10$.
\end{corollary}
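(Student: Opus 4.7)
The plan is to apply the standard Chernoff (exponential moment) method, following essentially Janson's approach. Since each $X_i$ is geometric with parameter $1/2$, we have $\Ex{X_i} = 2$, so the event $\sum_i X_i \geq 2\lambda m$ is a factor-$\lambda$ deviation above the mean $2m$; this is the Cram\'er large-deviation regime, which explains why the rate function $\lambda - 1 - \ln \lambda$ appears.

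First I would compute the moment generating function of a single summand:
\begin{equation*}
    \Ex{e^{sX_i}} = \sum_{k \geq 1} e^{sk}\, 2^{-k} = \frac{e^s}{2-e^s},
\end{equation*}
valid for $s < \ln 2$. Applying Markov's inequality to $e^{s\sum_i X_i}$ and using independence gives
\begin{equation*}
    \Prob{\sum_{i=1}^m X_i \geq 2\lambda m} \leq e^{-2\lambda m s}\pth{\frac{e^s}{2-e^s}}^m
\end{equation*}
for every $s \in (0, \ln 2)$. Differentiating the logarithm of the right-hand side shows the minimizer is $s^\star = \ln(2 - 1/\lambda)$, which lies in $(0, \ln 2)$ for $\lambda > 1$; substituting and using $2 - e^{s^\star} = 1/\lambda$ yields the sharp exponential-rate bound $\lambda^m\pth{2-1/\lambda}^{-(2\lambda-1)m}$. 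To upgrade this to the form $\lambda^{-1} 2^{-2m(\lambda - 1 - \ln \lambda)}$ stated in the corollary, I would invoke Janson's Theorem~2.3 from~\cite{j-tbsge-17}, which proves precisely this refinement (with the $\lambda^{-1}$ polynomial prefactor characteristic of Gamma/negative-binomial tails). The main obstacle, if one wants a self-contained derivation, is obtaining that $\lambda^{-1}$ prefactor: the $s$-optimized Chernoff argument supplies the exponential rate but not the polynomial correction, which requires either a coupling with Gamma variables or a direct combinatorial estimate of the negative-binomial tail via $\Prob{\sum X_i \geq 2\lambda m} = \Prob{\mathrm{Bin}(2\lambda m - 1, 1/2) \leq m - 1}$.

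For the specialization at $\lambda = 4$, the substitution $\lambda - 1 - \ln\lambda = 3 - 2\ln 2$ gives the bound $2^{-2m(3-2\ln 2)}/4 = 2^{-2m(3-2\ln 2) - 2}$. To verify this is at most $2^{-3m}$, I would check the equivalent inequality $2m(3 - 2\ln 2) + 2 \geq 3m$, i.e., $(3 - 4\ln 2)m \geq -2$. Since $3 - 4\ln 2 \approx 0.23 > 0$, the left side is nonnegative, so the inequality holds for every $m \geq 0$; the hypothesis $m \geq 10$ in the statement is a safety margin rather than a tight requirement.
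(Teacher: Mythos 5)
Your proposal takes the same route as the paper: the corollary is a direct instance of Janson's Theorem~2.3 (\cite{j-tbsge-17}), applied with all $p_i = 1/2$, $\mu = 2m$, and $p_* = 1/2$, and the paper offers no derivation beyond that citation. Your Chernoff calculation is correct (MGF, optimizer $s^\star = \ln(2-1/\lambda)$, resulting bound $(2-1/\lambda)^{-2\lambda m}(2\lambda-1)^m$), and your verification of the $\lambda=4$ specialization is also right. Two small points. First, your framing that the optimized Chernoff argument ``supplies the exponential rate but not the polynomial correction'' is not accurate: the Chernoff rate $I(2\lambda) = (2\lambda-1)\ln(2\lambda-1) - 2\lambda\ln\lambda$ is strictly \emph{larger} than $2(\ln 2)(\lambda - 1 - \ln\lambda)$ for every $\lambda > 1$ (they agree to second order at $\lambda = 1$, and one checks $I(2\lambda)' - 2\ln 2\,(1-1/\lambda) = 2\bigl[\ln(1-\tfrac{1}{2\lambda}) + \tfrac{\ln 2}{\lambda}\bigr] > 0$ for $\lambda > 1$). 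So the raw Chernoff bound has a strictly sharper exponent, and Janson's form trades that for the $\lambda^{-1}$ prefactor; the two are incomparable, with Chernoff dominating for $m$ large. Second, and usefully, this means your Chernoff bound $4^{8m}/7^{7m} = 2^{-(7\log_2 7 - 16)m} < 2^{-3.65m} \leq 2^{-3m}$ already establishes the $\lambda = 4$ conclusion directly for all $m \geq 1$ without invoking Janson at all (and without the $m \geq 10$ hypothesis, as you correctly observe) -- which is the only part of the corollary the paper actually uses, in \lemref{stuck:on:r}.
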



\subsubsection{Proof of \lemref{iterations}}
\apndlab{iterations:proof}

\RestatementOf{\lemref{iterations}}{\LemmaIterations}
\begin{proof}
  \LemmaIterationsProof  
\end{proof}


\subsubsection{Proof of a more general version of \lemref{center}}
\apndlab{center:proof}

\RestatementOf{\lemref{center}}{\LemmaCenter}
\begin{proof}
  Set $\epsA = \epsB/2$ and $\epsW = \epsB/4$.
  The proof follows exactly from the discussion in 
  \secref{centerpoint:analysis} and \lemref{iterations}
  with the introduced parameter $\epsW$.
  The key change is in \Eqref{f:d}. Since
  $\rmax = (1 - 2\epsW)n/(d+2)^2$ and we require that
  $(1 - \epsA)R = f(d)n$, we obtain
  \begin{align*}
    f(d) 
    \geq \frac{(1 - \epsA)(1 - 2\epsW)}{(d+2)^2}
    = \frac{(1 - \epsB/2)^2}{(d+2)^2}
    \geq \frac{1 - \epsB}{(d+2)^2}.
  \end{align*}
\end{proof}


\subsubsection{Proof of \thmref{center:point:compute:epsW}}
\apndlab{center:point:compute:proof:eps:W}

\RestatementOf{\thmref{center:point:compute:epsW}}{\ThmCenterPointCompute}
\begin{proof}
    Pick a random sample $\Net$ from $\PSet$ that is a
    $(\rho,\epsB/2)$-relative approximation for halfspaces, where
    $\rho = 1/(8d^2)$. This range space has \VC dimension $d+1$, and
    by \thmref{relative}, a sample of size
    \begin{align*}
      \SampleSize %
      =%
      O\bigl(\rho^{-1} \epsB^{-2}(d \log \rho^{-1} + \log
      \BadProb^{-1})\bigr)%
      =%
      O\bigl(d^2\epsB^{-2} ( d \log d + \log \BadProb^{-1})\bigr)
    \end{align*}
    is a $(\rho,\epsB/2)$-relative approximation.

    Running the algorithm of \lemref{center} on $\Net$ with
    $\epsB/2$ yields a $\tau$-centerpoint $\cPnt$ of $\Net$, where
    \begin{math}
        \tau= \frac{1-\epsB/2}{(d+2)^2\bigr.} \geq \rho
    \end{math}
    for $d \geq 2$.
    By the relative approximation property, this is a
    $(1\pm \epsB/2)\tau$-centerpoint of $\PSet$. As such, we have
    that $\cPnt$ is an $\cpq$-centerpoint for $\PSet$, where
    \begin{align*}
      \cpq %
      =%
      (1-\epsB/2) \tau %
      =%
      \frac{(1-\epsB/2)^2}{(d+2)^2}
      \geq%
      \frac{1-\epsB}{(d+2)^2}
    \end{align*}
    By \lemref{center}, the running time of the resulting algorithm is
    \begin{align*}
      O(\epsB^{-2} d^4 \SampleSize \log \SampleSize \log(
      \SampleSize \epsB^{-1} \BadProb^{-1}))%
      =%
      O\bigl( \epsB^{-4} d^7 \log^3 d \log^3 \pth{
      \epsB^{-1} \BadProb^{-1} }\bigr).
    \end{align*}
\end{proof}

\subsection{Missing proofs from \secref{w:eps:net:c}}
\apndlab{weak:eps:net}%



\subsubsection{Proof of \lemref{u:c:p}}
\apndlab{u:c:p:proof}

\RestatementOf{\lemref{u:c:p}}{\LemmaUCP}

\begin{proof}
    This is well known, and we include a proof for the sake of
    completeness.
    
    Let $\hset$ be the set of all hyperplanes which pass through
    $d$ points of $\PSet$. The original proof of the centerpoint
    theorem implies that a vertex of the arrangement $\ArrX{\hset}$
    is a $1/(d+1)$-centerpoint of $\PSet$.  Let $\VVX{\PSet}$ denote
    the set of vertices of $\ArrX{\hset}$. Observe that
    $\VVX{\PSet'} \subseteq \VVX{\PSet}$, for all
    $\PSet' \subseteq \PSet$, thus implying that $\VVX{\PSet}$
    contains all desired centerpoints. As for the size bound, observe
    that
    \begin{align*}
      \alpha = \cardin{\hset} \leq \binom{n}{d} \leq \pth{\frac{ne}{d}}^d %
    \end{align*}
    and
    \begin{align*}
      \cardin{\VVX{\PSet}} \leq \binom{\alpha}{d}
      \leq
      \pth{\frac{e \pth{\frac{ne}{d}}^d}{d}}^d%
      =%
      n^{d^2} \pth{\frac{e}{d}}^{d^2 + d}%
      = %
      O\bigl( n^{d^2} \bigr).
    \end{align*}
\end{proof}


\subsubsection{Proof of \lemref{weak:net:construction}}
\apndlab{weak:net:construction:proof}

\RestatementOf{\lemref{weak:net:construction}}{\LemmaWeakNetConstruction}

\begin{proof}
    Imagine running the algorithm of \thmref{func:net} with the better
    quality centerpoint algorithm, as sketched in
    \remref{better:center:slower}. This requires computing a sample
    $\Net$ of size as specified in \Eqref{exact:bound:2}. Let $\PSetA$
    be the universal set of centerpoints, as computed by
    \lemref{u:c:p}, for the set $\Net$. We claim that $\PSetA$ is a
    weak $\eps$-net. Indeed, assuming the sample $\Net$ is good, in
    the sense that the algorithm of \thmref{func:net} works (which
    happens with probability $\geq 1- \BadProb$), then running this
    algorithm on any convex query body $\body$ (using $\Net$),
    generates a sequence of points, such that one of them stabs
    $\body$, if $\body$ is $\eps$-heavy. However, the stabbing points
    computed by the algorithm of \thmref{func:net} are centerpoints
    of some subset of $\Net\!$.

    It follows that all the stabbing points that might be computed by
    the algorithm of \thmref{func:net}, over all possible $\eps$-heavy
    query bodies $\body$ are contained in the set $\PSetA$. Consequently,
    if $\body$ is $\eps$-heavy, then $\body$ must contain one of the
    points of $\PSetA$.
\end{proof}


\subsection{Missing proofs from \secref{c:n:correctness}}
\apndlab{missing:proofs:center:net}

\subsubsection{Proof of \lemref{center:net:iters}}
\apndlab{center:net:iters}

\RestatementOf{\lemref{center:net:iters}}{\LemmaCenterNetIters}
\begin{proof}
  \LemmaCenterNetItersProof
\end{proof}

\subsubsection{Proof of \lemref{center:net:correct}}
\apndlab{center:net:correct}

\RestatementOf{\lemref{center:net:correct}}{\LemmaCenterNetCorrect}
\begin{proof}
  \LemmaCenterNetCorrectProof
\end{proof}

\section{An easy lemma}
\apndlab{silly}

\begin{lemma}
    \lemlab{silly}%
    For any $p\in [0,1]$, and integers $0 \leq k \leq n$, we have
    \begin{math}
        \sum_{i=k}^n \binom{n}{i} p^i (1-p)^{n-i}%
        \leq \binom{n}{k} p^k.
    \end{math}
\end{lemma}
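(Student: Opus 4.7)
The statement is a standard binomial tail bound: it says that $\Pr[X \geq k] \leq \binom{n}{k} p^k$ when $X \sim \mathrm{Bin}(n,p)$. The plan is to give a short algebraic proof by exploiting the well-known identity
\begin{equation*}
\binom{n}{i}\binom{i}{k} = \binom{n}{k}\binom{n-k}{i-k},
\end{equation*}
which implies that $\binom{n}{i} \leq \binom{n}{k}\binom{n-k}{i-k}$ for all $i \geq k$, since $\binom{i}{k} \geq 1$.

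First I would substitute this inequality into the left-hand side, factor out $\binom{n}{k} p^k$, and reindex the sum by $j = i - k$. The resulting sum becomes
\begin{equation*}
\binom{n}{k} p^k \sum_{j=0}^{n-k} \binom{n-k}{j} p^j (1-p)^{(n-k)-j},
\end{equation*}
which by the binomial theorem equals $\binom{n}{k} p^k \cdot (p+(1-p))^{n-k} = \binom{n}{k} p^k$, yielding the desired bound.

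The proof involves no obstacle of substance; the only thing to verify carefully is the reindexing and that the identity $\binom{n}{i}\binom{i}{k} = \binom{n}{k}\binom{n-k}{i-k}$ (choose $k$ out of $n$, then complete to a set of size $i$) is used in the correct direction. As a sanity check, one could alternatively give a one-line probabilistic proof: letting $X \sim \mathrm{Bin}(n,p)$ and $Y$ be the number of $k$-subsets of successes, we have $Y = \binom{X}{k} \geq \mathbf{1}[X \geq k]$, so $\Pr[X \geq k] \leq \mathbb{E}[Y] = \binom{n}{k}p^k$ by linearity of expectation. Either argument fits in a few lines, and the algebraic version is the more elementary of the two.
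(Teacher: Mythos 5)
Your proposal is correct and follows essentially the same route as the paper: both bound $\binom{n}{i}$ by $\binom{n}{k}\binom{n-k}{i-k}$ for $i \geq k$, factor out $\binom{n}{k}p^k$, reindex, and sum the resulting binomial expansion to $1$. The only cosmetic difference is that you justify the key inequality via the identity $\binom{n}{i}\binom{i}{k} = \binom{n}{k}\binom{n-k}{i-k}$ together with $\binom{i}{k}\geq 1$, whereas the paper gives a direct overcounting argument; these are the same fact.
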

\begin{proof}
    Observe that for $k \leq i \leq n$, we have
    $\binom{n}{i} \leq \binom{n}{k}\binom{n-k}{i-k}$ -- indeed, the
    right side counts the number of ways to first choose $k$ elements
    of the set, and then choose additional $i-k$ elements from the
    remaining $n-k$ elements, which is definitely more than the number
    of ways to choose $i$ elements out of $n$ elements. Therefore,
    \begin{align*}
        \sum_{i=k}^n \binom{n}{i} p^i (1-p)^{n-i}%
        &=%
        \sum_{\ell=0}^{n-k} \binom{n}{\ell+k} p^{\ell+k} (1-p)^{(n-k)-\ell}\\%
        &\leq%
        \binom{n}{k}p^k\sum_{\ell=0}^{n-k} \binom{n-k}{\ell} p^{\ell}
        (1-p)^{(n-k)-\ell}%
        =%
        \binom{n}{k}p^k.
    \end{align*}
\end{proof}

\section{Animation of the centerpoint algorithm}
\apndlab{cpnt:anim}

\centerline{\animategraphics
   [autoplay,autoresume,width=0.8\linewidth,controls,loop,buttonsize=4em]%
   {2}{figs/cpnt_anim}{0}{19}}

\medskip
An animation of the centerpoint algorithm from
\thmref{center:point:compute}. Adobe Acrobat reader is needed in order
to view the animation properly. Initially, each of the three large
circles have 100 points and the three small circles have 400 points.

\end{document}